\documentclass[a4paper,UKenglish,cleveref,autoref,numberwithinsect,thm-restate]{lipics-v2021}

\hideLIPIcs  

\usepackage{mathtools}
\usepackage{comment}
\usepackage{xcolor}

\title{Linear-Time Exact Computation of Influence Spread on Bounded-Pathwidth Graphs} 

\titlerunning{Linear-Time Exact Influence Spread on Bounded-Pathwidth Graphs} 

\author{Kengo Nakamura}{Communication Science Laboratories, NTT, Inc., Japan \and \url{https://www.kecl.ntt.co.jp/icl/lirg/members/nakamura/} }{kengo.nakamura@ntt.com}{https://orcid.org/0000-0002-9615-3479}{}

\author{Masaaki Nishino}{Communication Science Laboratories, NTT, Inc., Japan \and \url{https://sites.google.com/view/masaaki-nishino} }{masaaki.nishino@ntt.com}{https://orcid.org/0000-0001-6489-5446}{}

\authorrunning{K. Nakamura and M. Nishino} 

\Copyright{Kengo Nakamura and Masaaki Nishino} 

\ccsdesc{Theory of computation~Parameterized complexity and exact algorithms}
\ccsdesc{Networks~Network algorithms}

\keywords{Influence spread, bounded pathwidth, network reliability, linear time algorithm} 

\category{} 

\funding{This work was supported by JSPS KAKENHI Grant Number JP26K02906.}

\nolinenumbers 

\newcommand{\reachto}{\!\rightsquigarrow\!}

\newcommand{\prob}[1]{\text{\rm Pr}( #1 )}
\newcommand{\probi}[1]{\text{\rm Pr}_{<i}( #1 )}
\newcommand{\order}[1]{O( #1 )}

\newcommand{\bddlo}{\mathsf{lo}}
\newcommand{\bddhi}{\mathsf{hi}}
\newcommand{\bddf}{\mathsf{f}}
\newcommand{\trans}{\Phi}

\newcommand{\modifymap}{\mathsf{change}}
\newcommand{\pluseq}{\mathrel{+}=}

\theoremstyle{definition}
\newtheorem{problem}[theorem]{Problem}

\usepackage[linesnumbered,ruled,vlined]{algorithm2e}
\DontPrintSemicolon
\SetInd{0.3em}{0.6em}
\SetVlineSkip{0.0pt}
\SetKwData{Pmap}{p}\SetKwData{Qmap}{q}\SetKwData{Rmap}{r}\SetKwData{Resmap}{res}
\colorlet{commentgray}{black!60!}

\SetCommentSty{mycommfont}

\allowdisplaybreaks[4]

\EventEditors{Pierre Fraigniaud}
\EventNoEds{1}
\EventLongTitle{20th Scandinavian Symposium on Algorithm Theory (SWAT 2026)}
\EventShortTitle{SWAT 2026}
\EventAcronym{SWAT}
\EventYear{2026}
\EventDate{June 17--19, 2026}
\EventLocation{Copenhagen, Denmark}
\EventLogo{}
\SeriesVolume{370}
\ArticleNo{21}

\begin{document}

\maketitle

\begin{abstract}
Given a network and a set of vertices called seeds to initially inject information, influence spread is the expected number of vertices that eventually receive the information under a certain stochastic model of information propagation.
Under the commonly used independent cascade model, influence spread is equivalent to the expected number of vertices reachable from the seeds on a directed uncertain graph, and the exact evaluation of influence spread offers many applications, e.g., influence maximization.
Although its evaluation is a \#P-hard task, there is an algorithm that can precisely compute the influence spread in $\order{mn\omega_p^2\cdot 2^{\omega_p^2}}$ time, where $\omega_p$ is the pathwidth of the graph.
We improve this by developing an algorithm that computes the influence spread in $\order{(m+n)\omega_p^2\cdot 2^{\omega_p^2}}$ time.
This is achieved by identifying the similarities in the repetitive computations in the existing algorithm and sharing them to reduce computation.
Although similar refinements have been considered for the probability computation on undirected uncertain graphs, a greater number of similarities must be leveraged for directed graphs to achieve linear time complexity.
\end{abstract}

\section{Introduction}
\label{sec:intro}
Given a network and \emph{seed set} $S$, which is a set of vertices that are initially injected information, \emph{influence spread}~\cite{domingos01im} $\sigma(S)$ is the expected number of vertices that eventually receives the information under a stochastic model of information propagation.
Influence spread is implemented in viral marketing applications with the aim of evaluating the influence of a particular node within a social network.
Currently, it is also a keystone for other network applications such as network monitoring~\cite{leskovec07monitor}, rumor control~\cite{wu17rumor}, target advertisement~\cite{li15target}, and social recommendation~\cite{ye12social}.
Among several information diffusion models, the most basic and commonly used one is the \emph{independent cascade (IC) model}~\cite{goldenburg01ic}.
In this model, a network is modeled as a directed graph $G=(V,E)$, where each edge $e\in E$ is associated with a probability $p_e$.
When a vertex $v$ receives information, it stochastically propagates the information along every outgoing edge $e$ independently with a given probability $p_e$.

The influence spread $\sigma(S)$ under the IC model is equivalent to the expected number of vertices reachable from at least one vertex in $S$ on a directed \emph{uncertain graph}.
Given directed graph $G=(V,E)$ and probability $p_e$ for every edge $e\in E$, we consider an uncertainty in which each edge $e\in E$ is present with probability $p_e$ and absent with probability $1-p_e$.
We assume that each edge's presence or absence is stochastically independent of the other edges.
Accordingly, the probability that a subgraph $(V,E^\prime)\ (E^\prime\subseteq E)$ appears is
\begin{equation}
  \prob{E^\prime}\coloneqq\prod_{e\in E^\prime}p_e\cdot\prod_{e\in E\setminus E^\prime}(1-p_e).
\end{equation}
Here, a directed path from some $s\in S$ to vertex $v$ on the subgraph $(V,E^\prime)$ corresponds to the route of information propagation from $s$ to $v$.
Therefore, given $S\subseteq V$ and $v\in V$, the probability that $v$ receives information is equal to the probability $\prob{S\reachto v}$ that $v$ can be reached from some vertices in $S$ under uncertainty.
Here, $\prob{S\reachto v}=\sum_{E^\prime\in\mathcal{E}_{S\reachto v}}\prob{E^\prime}$, where $\mathcal{E}_{S\reachto v}$ is the family of subsets $E^\prime$ of edges such that $v$ can be reached from some vertices in $S$ on $(G,E^\prime)$.
Finally, the influence spread can be represented as $\sigma(S)=\sum_{v\in V\setminus S}\prob{S\reachto v}$.

\begin{example}
  Figure~\ref{fig:uncertain}a depicts examples of the probabilities $\prob{E^\prime}$ of subgraphs $(V,E^\prime)\ (E^\prime\subseteq E)$.
  For this uncertain graph, there are 15 subgraphs in which $v=4$ is reachable from $\{S\}=1$ (Figure~\ref{fig:uncertain}b); thus, $\prob{\{1\}\reachto 4}$ is the sum of $\prob{E^\prime}$ over all these subgraphs.
  By evaluating $\prob{\{1\}\reachto 2}$, $\prob{\{1\}\reachto 3}$, and $\prob{\{1\}\reachto 4}$ as in Figure~\ref{fig:uncertain}c, $\sigma(\{1\})$ can be computed as their sum.
\end{example}

\begin{figure}[!tb]
    \centering
    \includegraphics[keepaspectratio]{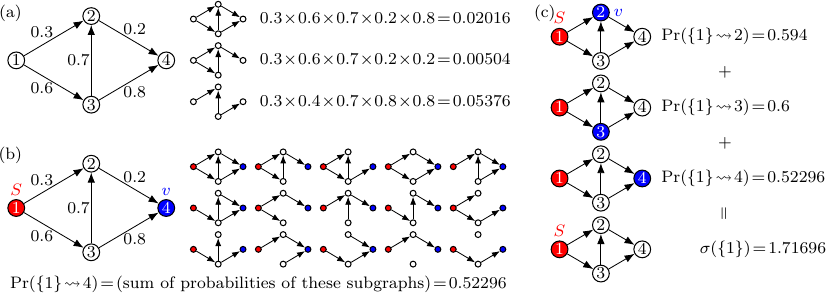}
    \caption{(a) Example of directed uncertain graph and probabilities of subgraphs. (b) All 15 subgraphs in which $v=4$ is reachable from $\{S\}=1$. $\prob{\{1\}\reachto 4}$ is the sum of probabilities of these subgraphs. (c) $\sigma(\{1\})$ is computed as the sum of $\prob{\{1\}\reachto v}$s for $v=2,3,4$.}
    \label{fig:uncertain}
\end{figure}

Since evaluating the influence spread $\sigma(S)$ is a \#P-hard task~\cite{chen10hardness}, most studies have relied on Monte Carlo simulation for evaluating $\sigma(S)$~\cite{goyal11celf,kempe03im,leskovec07monitor,ohsaka14montecarlo}.
However, exact computation of $\sigma(S)$ is important for the following reasons.
First, this allows us to more accurately evaluate the influence spread in larger networks, since real networks often consist of small communities; the exact computation of influence spread in each small community will improve overall evaluation quality.
Second, this allows us to rank the influential vertices in descending order of influence spread.
For this purpose, Monte Carlo simulations are not sufficient because $\Omega(1/\varepsilon^2)$ samples are needed to obtain $(1\pm\varepsilon)$-approximation with high probability~\cite{ohsaka14montecarlo}; to distinguish every vertex's influence spread, high accuracy such as $\varepsilon\sim 10^{-5}$ is sometimes needed, which is costly for Monte Carlo methods.

Maehara et al.~\cite{maehara17} proposed the only non-trivial algorithm to exactly compute $\sigma(S)$ under the IC model.
Given seed set $S$ and $v$, it efficiently computes $\prob{S\reachto v}$, i.e., the probability that $v$ receives information.
Given the path decomposition of $G$ whose width is $\omega_p$, this probability for a single $v$ can be computed in $\order{m\omega_p^2\cdot 2^{\omega_p^2}}$ time, where $m$ is the number of edges.
Hereafter, we mean the pathwidth of a directed graph $G$ in the sense of the pathwidth of the underlying undirected graph of $G$ as used in the literature~\cite{maehara17,suzuki18}.
The influence spread $\sigma(S)=\sum_{v\in V\setminus S}\prob{S\reachto v}$ can be computed in $\order{mn\omega_p^2\cdot 2^{\omega_p^2}}$ time, where $n$ is the number of vertices.

Here, even for bounded-pathwidth graphs, obtaining the $\sigma(S)$ value requires super-linear ($\order{mn}$) time.
Since there are similarities in the repetitive computation for different vertices $v$ using this algorithm, we investigated the use of these similarities to refine their algorithm and thus improve the running time for computing $\sigma(S)$.

\subsection{Our Contribution}
\label{ssec:contribution}
In this paper, we propose an algorithm for computing $\sigma(S)$ for given $S$ under the IC model.
Specifically, the proposed algorithm simultaneously computes $\prob{S\reachto v}$ in an uncertain directed graph for all vertices $v$.
Our main result can be expressed as follows.
\begin{theorem}
  \label{thm:main2}
  Given directed graph $G$, seed set $S$, each edge's probability $p_e$ of presence, and the path decomposition of $G$ whose width is $\omega_p$, the probability $\prob{S\reachto v}$ for every vertex $v$ can be computed in $\order{(m+n)\omega_p^2\cdot 2^{\omega_p^2}}$ time in total.
\end{theorem}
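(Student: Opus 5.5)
We would prove the statement by a forward--backward dynamic programming scheme over the path decomposition, in the spirit of the linear-time all-terminal refinements for undirected reliability. Fix a nice path decomposition $B_1,\dots,B_N$ with $N=\order{m+n}$, and order the edges so that each edge $e$ is processed at the first bag containing both endpoints. The state at the bag where the first $i$ edges have been processed is a directed reachability relation $R\subseteq B\times B$ on the current bag $B$: for each ordered pair $(u,w)$ we record whether $w$ is reachable from $u$ using only the processed edges. We also record, for each $w\in B$, whether $w$ is reachable from $S$; this can be encoded by adding a virtual super-source, which keeps the state space at $2^{\order{\omega_p^2}}$. This is the state space of Maehara et al.~\cite{maehara17}. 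Adding an edge $(a,b)$ updates the relation by a transitive-closure step in $\order{\omega_p^2}$ time. Introducing a vertex adds an isolated row and column. Forgetting a vertex projects the relation onto the remaining vertices, and this projection is well defined because any path through a forgotten vertex has already been summarized by the closure.

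The forward pass computes, for each step $i$ and each state $R$, the probability mass $\bddf_i(R)$ of edge subsets on the first $i$ edges that induce $R$. The backward pass computes, for each state $R$ at step $i$, the quantity $g_i(R,w)$. Because such a quantity is indexed by a target vertex, we plan a different backward quantity so that no per-$v$ work is needed. For a vertex $v$ and a step $i$ at which $v$ is in the bag, $\prob{S\reachto v}$ is not determined by the prefix state alone, since later edges can create new $S\reachto v$ paths through other bag vertices. The fix is to choose, for each vertex $v$, the step $i_v$ at which $v$ is forgotten. After that step no further edges touch $v$, so $v$ cannot gain reachability later except through paths that enter $v$. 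Every path entering $v$ uses an edge incident to $v$, and all such edges have been processed by step $i_v$. Hence $S\reachto v$ holds in the full graph iff, in the state just before $v$ is forgotten, there is some bag vertex $u$ with $u$ reachable from $S$ in the \emph{final} graph and $u\reachto v$ in the prefix relation.

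We then define the backward value $h_i(R,X)$ as the probability over the suffix edges that exactly the set $X\subseteq B_i$ of bag vertices becomes $S$-reachable in the final graph, given the prefix state $R$. This quantity depends only on $R$, the suffix, and how future paths may reenter the bag, which the state captures. It is still too large to enumerate naively, so we instead carry a single combined backward table. Concretely, we define $\bddlo/\bddhi$-style suffix functions $q_i(R)$ giving the conditional probability that $v$ is reached, summed appropriately. The key step is to show that $\sum_R \bddf_{i_v}(R)\cdot \Pr(\text{suffix makes some } u \text{ with } (u,v)\in R \text{ reachable})$ can be computed from tables indexed only by states. This works if the backward table is indexed by the states at step $i_v$ and evaluated for the predicate ``$v$'s in-set intersects the final reachable set''. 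Since $v$'s in-neighbors within the closure form a subset of the bag, this predicate is a function of the pair (prefix state, suffix outcome on the bag). We would encode the suffix outcome as the \emph{backward} relation state, symmetric to the forward one: which bag vertices can reach which in the suffix graph, with $S$-reachability arriving from the suffix side. Combining a forward relation with a backward relation by closure gives full reachability among bag vertices.

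The per-vertex answer is then $\sum_{R,R'}\bddf_{i_v}(R)\,b_{i_v}(R')\,[\,v\text{ reached in closure}(R\cup R')\,]$. Evaluated naively this is quadratic in the number of states, $2^{2\omega_p^2}$, which is the main obstacle. Avoiding it requires the refinement alluded to in the abstract: exploit that forgetting $v$ merges states. Instead of pairing forward and backward tables at $i_v$, we push a marker for $v$ through the backward pass. We treat $v$ as a second ``sink'' and compute, via a single backward sweep whose state augments $R'$ with one bit per bag vertex (``reaching this vertex from $S$ in the suffix reaches the current target''), the desired sums for all targets simultaneously. Targets are injected at their forget steps, and their contributions are linearly aggregated, since each target only needs a scalar result. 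This linearity, meaning that the answer for $v$ is a linear functional of $\bddf_{i_v}$ and that the adjoint backward recursion can carry all pending targets' functionals summed into one vector per step, is what we expect to be hardest to establish, particularly proving the states remain $2^{\order{\omega_p^2}}$ and transitions cost $\order{\omega_p^2}$. Summing over $\order{m+n}$ steps then gives the bound of the statement.
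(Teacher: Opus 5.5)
There is a genuine gap, and it sits at the step that has to make the algorithm linear. Your setup is sound. The forward table over prefix reachability states is the top-down DP of Algorithm~\ref{alg:previous}. Your observation that, once every edge at $v$ has been processed, $S\reachto v$ is decided by the prefix state together with which bag vertices the remaining edges make $S$-reachable, is essentially the content of Lemma~\ref{lem:equiv1}. You also correctly see that pairing forward and backward relation distributions costs about $2^{2\omega_p^2}$ per vertex. The repair, however, does not deliver the statement. If all pending targets are linearly aggregated into one vector per step, the sweep ends in a single number, $\sum_{v\in V\setminus S}\prob{S\reachto v}=\sigma(S)$. The theorem instead asks for every $\prob{S\reachto v}$ individually, and the fact that each target ``only needs a scalar'' does not let you add the scalars and recover them afterwards. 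If instead you keep one augmented vector per pending target, each must be pushed through $\order{m}$ steps, which is exactly the $\order{mn}$ cost of running Algorithm~\ref{alg:previous} once per vertex. You flag this linearity step yourself as unproven, and as written it carries the whole difficulty.

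The missing idea is that targets can be shared only on the side of the computation where the stored numbers are \emph{conditional probabilities} (functions of a configuration), not probability masses. You were close when you said the backward table should be evaluated for the predicate ``$v$'s in-set meets the final $S$-reachable set''. Index that table by the pair (prefix configuration, in-set), and observe that it no longer refers to $v$ at all. This is how the paper proceeds. By Lemma~\ref{lem:equiv1}, once $v$ has left the frontier, $\prob{S\reachto v\mid\cdot}$ depends on $v$ only through its column of the TC. So $v$ is renamed to a single imaginary sink $t$ via $\modifymap^t$, and $\prob{S\reachto t\mid\Phi^t}$ is computed once per $\Phi^t$ by the ordinary bottom-up $\bddlo/\bddhi$ recursion with $\top/\bot$ pruning. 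This one diagram has $\order{2^{|W_i|^2}}$ nodes per level (Lemma~\ref{lem:pattern_phi}) and serves all targets. The masses $\prob{\Psi}$ on the target-free STCs (Observation~\ref{obs:top}) are computed top-down. Each $\prob{S\reachto v}$ is then read off separately, with $\order{2^{\omega^2}}$ work, at a level $i$ with $v\in W_i$, as $\sum_{\Psi}\prob{\Psi}\cdot\prob{S\reachto v\mid\Psi}$. The paper carries the second factor through the middle levels via SCCs of $G_i(\Psi)$ (Lemmas~\ref{lem:rdp1} and~\ref{lem:rdp2}); at the level of $v$'s last incident edge it reduces to a $\bddlo/\bddhi$ combination of two $\Phi^t$-values, as in Lemma~\ref{lem:rdp2}. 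Per-vertex work of this size fits the budget; what must be avoided is per-vertex diagrams, not per-vertex work. Finally, your accounting only promises $2^{\order{\omega_p^2}}$ states, and the naive count of relations on bags of size $\omega_p+1$ with $S$-bits is $2^{\omega_p^2+\Theta(\omega_p)}$. To reach the stated $\order{(m+n)\omega_p^2\cdot 2^{\omega_p^2}}$ you therefore also need an explicit count, such as Lemmas~\ref{lem:pattern_phi} and~\ref{lem:pattern_psi}, over frontiers of size at most $\omega_p$.
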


Since even inputting the entire graph requires linear time, this algorithm is asymptotically optimal for graphs with a bounded pathwidth when the constant factor is ignored.
Moreover, the exponential factor of the pathwidth, hidden in the constant, remains the same as in the algorithm of Maehara et al.~\cite{maehara17}: the dependence on the graph sizes is improved from $\order{mn}$ to $\order{m+n}$ while the dependence on the pathwidth is kept the same.

Our algorithm shares certain ideas with the existing algorithm for simultaneously computing the probabilities of connection in undirected graphs~\cite{nakamura21} in that both use similarities in the computation of the previous algorithms~\cite{hardy2007knr,maehara17} for different vertices.
However, there is a non-trivial gap in extending that algorithm~\cite{nakamura21} to directed graphs due to the difference between connectivity in undirected graphs and reachability in directed graphs.
We fill this gap by finding additional similarities in computing $\sigma(S)$ with the previous algorithm~\cite{maehara17}; details are discussed in Section~\ref{sec:discussion}.

\subsection{Related Work}
\label{ssec:related}
\begin{description}
\item[Influence maximization.]
The most well-studied problem related to influence spread is influence maximization~\cite{kempe03im}, that is, the problem of choosing a seed set $S$ to maximize the influence spread.
This has attracted much attention over the past two decades~\cite{li18survey,ling23deep,singh22survey}.
Although the influence maximization problem under most diffusion models including the IC model is NP-hard~\cite{kempe03im}, a simple greedy algorithm~\cite{kempe03im} can achieve a $(1-1/e)$-approximation provided that every $\sigma(S)$ is computed exactly.
Various methods, e.g., the sketch-based methods described in a survey~\cite{li18survey}, have improved this greedy algorithm.
However, if we resorted to using Monte Carlo simulation to evaluate $\sigma(S)$, we could not obtain a deterministic approximation bound for influence maximization, even with greedy or more sophisticated algorithms.
Moreover, although many studies have used an alternative proxy for $\sigma(S)$ to speed up the computation (e.g., proxy-based methods~\cite{li18survey} and learning-based methods~\cite{ling23deep}), they also lack approximation bounds.
This situation accentuates the theoretical importance of an exact evaluation of influence spread.

\item[Network reliability.]
Computing the probability that some vertices are connected in an uncertain graph has been traditionally studied as \emph{network reliability} evaluation in the network community.
Most studies on network reliability evaluation have modeled the network as an \emph{undirected} graph and, given a vertex set $K$ and the probability $p_e$ of presence for every edge, they computed the probability that the vertices in $K$ are connected.
Even for undirected graphs, computing this probability is \#P-complete~\cite{valiant79}.
Nevertheless, Hardy et al.~\cite{hardy2007knr} developed an algorithm that could compute this probability in linear time for graphs with bounded pathwidth.
Another line of research has attempted to compute the expected number of vertices connected to vertex $u$~\cite{nakamura23a}, which is an analogue of the influence spread under the IC model for undirected graphs.
Nakamura et al.~\cite{nakamura21} proposed an algorithm that could compute the probability that a vertex $u$ is connected to $v$ for \emph{every} $v\in V$ in linear time for an undirected graph with bounded pathwidth.
By summing all the probabilities for every $v$, it is also possible to compute the expected number of vertices connected to $v$ in linear time.
This algorithm uses the similarities in the computation of the probability that $u$ and $v$ are connected with Hardy's algorithm~\cite{hardy2007knr} for different $v$; therefore, it shares similar ideas with our work.
However, we cannot straightforwardly extend this algorithm to handle directed graphs, as described in Section~\ref{sec:intro}; detailed discussions are given in Section~\ref{sec:discussion}.
Note that these algorithms~\cite{hardy2007knr,nakamura21} also run in polynomial time for graphs with bounded pathwidth, indicating the importance of algorithms dealing with bounded-pathwidth graphs.
\end{description}

\section{Preliminaries}
\label{sec:preliminaries}
For sets $U,W$ of vertices, $U\reachto_G W$ means that $u$ can reach $w$, i.e., $w$ can be reached from $u$, for some $u\in U$ and $w\in W$ in directed graph $G$.
If $U$ (or $W$) consists of only a single vertex $u$ ($w$), we simply write, e.g., $u\reachto_G W$ and $u\reachto_G w$ instead of $\{u\}\reachto_G W$ and $\{u\}\reachto_G\{w\}$.
When it is clear from the context, we omit the subscript $G$.

A directed graph is called \emph{strongly connected} if $u\reachto v$ for every ordered pair $(u,v)$ of vertices.
We say vertex subset $C$ is a \emph{strongly connected component} (SCC) of $G$ if the vertex-induced subgraph of $G$ induced by $C$ is strongly connected but that induced by $V\cup\{v\}$ is not strongly connected for any $v\in V\setminus C$.
It is known that, given directed graph $G$, all SCCs in $G$ can be computed in linear time in the size of $G$~\cite{tarjan72SCC}.
Throughout this paper, $n$ and $m$ denote the number of vertices and edges, respectively, in the input graph $G$.

We formally define the problems to solve as follows.
\begin{problem}
  \label{prob:influence}
  Given a directed graph $G$, probabilities $\{p_e\}_{e\in E}$, and a seed set $S\subseteq V$, compute the probability $\prob{S\reachto v}$ for every $v\in V\setminus S$.
\end{problem}

For convenience, we assume that $G$ contains no self-loops and the underlying undirected graph of $G$ is connected.
Note that if $G$ contains self-loops, we can safely remove it, since it would never affect reachability.
If $G$ is disconnected, we can solve the problem individually for each connected component.

\section{Review of Previous Algorithm}
\label{sec:past}

\subsection{Decomposition of Probability}
In the following, we review Maehara's algorithm~\cite{maehara17} for exactly computing $\prob{S\reachto v}$ for a given $S,v$.
Although the literature \cite{maehara17} only showed the algorithm for the case $|S|=1$, here we show the general $S$ case because it can be easily extended.

The basic idea in evaluating $\prob{S\reachto v}$ is to decompose $\prob{S\reachto v}$ into terms that can be easily computed.
For $E^\prime,E^{\prime\prime}\subseteq E$ such that $E^\prime\cap E^{\prime\prime}=\emptyset$, let $\prob{\cdot\mid E^\prime,\overline{E^{\prime\prime}}}$ be the conditional probability given that the edges in $E^\prime$ are present and those in $E^{\prime\prime}$ are absent.
From the case analysis of whether $e\in E\setminus (E^\prime\cup E^{\prime\prime})$ is present or absent,
\begin{equation}
  \prob{S\reachto v\mid E^\prime,\overline{E^{\prime\prime}}}=p_e\cdot\prob{S\reachto v\mid E^\prime\cup\{e\},\overline{E^{\prime\prime}}}+(1-p_e)\cdot\prob{S\reachto v\mid E^\prime,\overline{E^{\prime\prime}\cup\{e\}}}. \label{eq:singledecomp}
\end{equation}
By ordering edges as $e_1,\ldots,e_m$, $\prob{S\reachto v}$ can be decomposed by recursively applying (\ref{eq:singledecomp}):
\begin{align}
  \prob{S\reachto v} & = p_{e_1}\cdot\prob{S\reachto v\mid \{e_1\},\overline{\emptyset}}+(1\!-\!p_{e_1})\cdot\prob{S\reachto v\mid \emptyset,\overline{\{e_1\}}} \nonumber \\
  & = p_{e_1}p_{e_2}\cdot\prob{S\reachto v\mid \{e_1,e_2\},\overline{\emptyset}}+p_{e_1}(1\!-\!p_{e_2})\cdot\prob{S\reachto v\mid \{e_1\},\overline{\{e_2\}}}+ \nonumber \\
  &\ \ (1\!-\!p_{e_1})p_{e_2}\cdot\prob{S\reachto v\mid \{e_2\},\overline{\{e_1\}}}+(1\!-\!p_{e_1})(1\!-\!p_{e_2})\cdot\prob{S\reachto v\mid \emptyset,\overline{\{e_1,e_2\}}} \nonumber \\
  & = \cdots = \sum_{E^\prime\subseteq E_{<i}}\probi{E^\prime}\cdot\prob{S\reachto v\mid E^\prime,\overline{E_{<i}\setminus E^\prime}}
  = \cdots, \label{eq:expansion1}\\
  \text{where}\quad & E_{<i}\coloneqq\{e_1,\ldots,e_{i-1}\},\quad\probi{E^\prime}\coloneqq\prod_{e\in E^\prime}p_e\cdot\prod_{e\in E_{<i}\setminus E^\prime}(1-p_e).\nonumber
\end{align}
This expansion eventually reaches the definition $\prob{S\reachto v}\coloneqq\sum_{E^\prime\in\mathcal{E}_{S\reachto v}}\prob{E^\prime}$.
Therefore, by using this decomposition naively, we have $\order{2^m}$ terms, which incurs exponential complexity of the graph size in evaluating it.

To prevent this, we attempt to detect the equality among probabilities, i.e., to find subsets $E^\prime,F^\prime\subseteq E_{<i}\ (E^\prime\neq F^\prime)$ such that $\prob{S\reachto v\mid E^\prime,\overline{E_{<i}\setminus E^\prime}}=\prob{S\reachto v\mid F^\prime,\overline{E_{<i}\setminus F^\prime}}$.
If such equality can be detected, we simply need to further decompose only one among the equal probabilities, which reduces the number of terms.
If the following condition (\#) holds, we can confirm $\prob{S\reachto v\mid E^\prime,\overline{E_{<i}\setminus E^\prime}}=\prob{S\reachto v\mid F^\prime,\overline{E_{<i}\setminus F^\prime}}$:
\begin{quote}
  (\#) For any $H\subseteq E_{\geq i}\coloneqq E\setminus E_{<i}$, $S\reachto_{(V,E^\prime\cup H)}v$ if and only if $S\reachto_{(V,F^\prime\cup H)}v$.    
\end{quote}

To check condition (\#), we focus on the reachability relation on the subgraph $(V,E^\prime)$.
If $u\reachto_{(V,E^\prime)} w$ and $w\reachto_{(V,F^\prime)} v$ are equivalent for any $u,w\in V$, we can confirm that (\#) holds.
In brief, this is because, given a path from $s\in S$ to $v$ on $(V,E^\prime\cup H)$, we can construct a path from $s$ to $v$ on $(V,F^\prime\cup H)$ by using the above equivalence, and vice versa.

Maehara et al.~\cite{maehara17} indicated that only a limited subset of vertices called \emph{frontier vertices} is important for checking (\#).
\begin{definition}
  \label{def:frontier}
  Given edge ordering $e_1,\ldots,e_m$, ($i$-th) \emph{frontier vertices} $W_i$ are those appearing in both $E_{<i}$ and $E_{\geq i}$.
\end{definition}
We consider the reachability relations among $W_i\cup S\cup\{v\}$ on $(V,E^\prime)$ as follows.
\begin{definition}
  Let $W_i^{S\reachto}(E^\prime)\coloneqq\{w\in W_i\mid S\reachto_{(V,E^\prime)}w\}$ (frontier vertices reachable from $S$) and $W_i^{\reachto v}(E^\prime)\coloneqq\{w\in W_i\mid w\reachto_{(V,E^\prime)}v\}$ (those that can reach $v$).
  A binary matrix $\Phi_i^v(E^\prime)$ is defined as follows: (I) It is indexed by $(W_i\setminus W_i^{S\reachto}(E^\prime)\cup\{S\})\times (W_i\setminus W_i^{\reachto v}(E^\prime)\cup\{v\})$. (II) $\Phi_i^v(E^\prime)_{uw}=1$ if and only if $u\reachto_{(V,E^\prime)}w$.
  Here, $S$ is treated as a special (single) row index, and $v$ is treated as a column index.
  We call $\Phi_i^v(E^\prime)$ a \emph{transversal configuration} (TC).
\end{definition}
We can then prove the following.
\begin{lemma}
  $\Phi_i^v(E^\prime)=\Phi_i^v(F^\prime)$ is a sufficient condition for (\#), meaning that $\prob{S\reachto v\mid E^\prime,\overline{E_{<i}\setminus E^\prime}}=\prob{S\reachto v\mid F^\prime,\overline{E_{<i}\setminus F^\prime}}$.
Note that $\Phi_i^v(E^\prime)=\Phi_i^v(F^\prime)$ means that the indices of matrices as well as their entries are equal.
\end{lemma}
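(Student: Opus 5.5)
By symmetry it suffices to show one direction of (\#). Fix $H\subseteq E_{\geq i}$ and suppose $S\reachto_{(V,E^\prime\cup H)}v$; we will show $S\reachto_{(V,F^\prime\cup H)}v$. Once (\#) holds, the conditional probabilities coincide, because each equals $\sum_{H}\Pr_{\geq i}(H)\,[S\reachto_{(V,\cdot\,\cup H)}v]$, where $\Pr_{\geq i}(H)$ is the product over $E_{\geq i}$ of $p_e$ for $e\in H$ and $1-p_e$ for $e\notin H$.

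Take a directed path $P$ from some $s\in S$ to $v$ in $(V,E^\prime\cup H)$. We may assume $P$ is shortest, so its only vertex in $S$ is $s$. Cut $P$ into maximal segments whose edges lie entirely in $E^\prime$ or entirely in $H$.

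\textbf{Key observation.} Any vertex where an $E^\prime$-segment meets an $H$-segment is incident to an edge of $E_{<i}$ and to an edge of $E_{\geq i}$, so it lies in $W_i$. Hence each $E^\prime$-segment runs between points of $W_i\cup\{s,v\}$:
\begin{itemize}
\item it starts at $s$ or at a frontier vertex;
\item it ends at $v$ or at a frontier vertex.
\end{itemize}
The $H$-segments are unchanged when we pass to $F^\prime\cup H$. So it suffices to replace each $E^\prime$-segment by a walk in $(V,F^\prime\cup H)$, or to short-cut $P$ directly.

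\textbf{Case analysis.}
\begin{enumerate}
\item If $P$ visits a frontier vertex in $W_i^{S\reachto}(E^\prime)$, let $w$ be the last such vertex on $P$. The suffix of $P$ after $w$ is handled by the cases below. For the prefix, equality of the index sets gives $W_i^{S\reachto}(E^\prime)=W_i^{S\reachto}(F^\prime)$, since these are exactly the frontier vertices missing from the row index. So $S\reachto_{(V,F^\prime)}w$, and we restart from $w$.
\item Similarly, let $x$ be the first vertex on $P$ lying in $W_i^{\reachto v}(E^\prime)=W_i^{\reachto v}(F^\prime)$. Then $x\reachto_{(V,F^\prime)}v$, and it suffices to reach $x$.
\item Between these two points, every $E^\prime$-segment goes from some $u$ to some $w$ with:
\begin{itemize}
\item $u\in (W_i\setminus W_i^{S\reachto}(E^\prime))\cup\{S\}$, where $s$ is represented by the row $S$, which is legitimate when the segment starts at $s$ since $s\in S$;
\item $w\in (W_i\setminus W_i^{\reachto v}(E^\prime))\cup\{v\}$.
\end{itemize}
Then $\Phi_i^v(E^\prime)_{uw}=1$, hence $\Phi_i^v(F^\prime)_{uw}=1$, so $u\reachto_{(V,F^\prime)}w$. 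Here we must check that the row-$S$ entry means reachability from some vertex of $S$. This is harmless, since any $s'\in S$ serves equally well as a start.
\end{enumerate}

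\textbf{Main obstacle.} The delicate point is in step 3. An intermediate endpoint might lie in $W_i^{S\reachto}$ (so it has no row) or in $W_i^{\reachto v}$ (so it has no column). The choice of ``last'' vertex in step 1 and ``first'' vertex in step 2 is designed to exclude exactly these situations. Concretely, we take
\begin{itemize}
\item $w^\ast$ to be the last vertex of $P$ belonging to $S\cup W_i^{S\reachto}(E^\prime)$,
\item $x^\ast$ to be the first vertex after $w^\ast$ belonging to $\{v\}\cup W_i^{\reachto v}(E^\prime)$.
\end{itemize}
Every $E^\prime$-segment strictly between them then has a valid row and column index, and the segments touching $w^\ast$ and $x^\ast$ are covered by steps 1 and 2. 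Concatenating the pieces yields an $S$-to-$v$ walk in $(V,F^\prime\cup H)$, which proves (\#).
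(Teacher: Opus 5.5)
Your proof is correct and follows the paper's own argument. You split an $S$--$v$ walk into $E^\prime$- and $H$-segments meeting at frontier vertices, shortcut through vertices of $W_i^{S\reachto}$ and $W_i^{\reachto v}$ (these sets agree for $E^\prime,F^\prime$ because the index sets are equal), and replace each remaining $E^\prime$-segment using the equal entries, which is the argument carried out in detail in the proof of Lemma~\ref{lem:equiv1}. The only thing you leave implicit is the boundary claim behind ``covered by steps 1 and 2'': an $E^\prime$-segment leaving $w^\ast$ cannot end at a frontier vertex, since that vertex would lie in $W_i^{S\reachto}(E^\prime)$ and contradict lastness, so it runs to $v$ and is handled by the entry $\Phi_i^v(\cdot)_{Sv}$; likewise, an $E^\prime$-segment starting strictly after $w^\ast$ cannot end at $x^\ast$, since its start would lie in $W_i^{\reachto v}(E^\prime)$ and contradict firstness.
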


This can be shown by transforming a $u$-$v$ directed path on $(V,E^\prime\cup H)$ for some $u\in S$ to a $u^\prime$-$v$ directed path on $(V,F^\prime\cup H)$ for some $u^\prime\in S$.
Now we can rewrite $\prob{S\reachto v\mid E^\prime,\overline{E_{<i}\setminus E^\prime}}$ as $\prob{S\reachto v\mid \Phi_i^v(E^\prime)}$.
\begin{example}
  On the graph in Figure~\ref{fig:example}a with $v=5$, we examine $\{e_1,e_4\},\{e_1,e_2,e_3,e_4\},$ $\{e_1,e_2,e_4,e_6\},\{e_2,e_5,e_6\}\subseteq E_{<8}$ corresponding to the four edge subgraphs of Figure~\ref{fig:example}b.
  These subsets have an identical TC $\Phi_8^5$ because $W_8=\{4,5\}$.
  Thus, the conditional probabilities of $S\reachto 5$ are all equal for these subsets.
\end{example}

\begin{figure}[tbp]
\begin{minipage}{0.56\columnwidth}
\begin{algorithm}[H]
{\footnotesize
  $\mathcal{N}_1^\prime\leftarrow\{\Phi_1^v(\emptyset)\}$, $\mathcal{N}_i^\prime\leftarrow\{\}\ (i\geq 2)$\;
  \For(\tcp*[f]{Diagram construction}){$i \leftarrow 1\ \KwTo\ m$}{
    \ForEach{$\Phi\in\mathcal{N}_i^\prime$}{
      \ForEach{$\bddf\in\{\bddlo,\bddhi\}$}{
        \lIf{$\bddf^v(\Phi)=\bot$}{$\Phi.\bddf\leftarrow\bot$}
        \lElseIf{$\bddf^v(\Phi)=\top$}{$\Phi.\bddf\leftarrow\top$}
        \lElse{ $\mathcal{N}_{i+1}^\prime\!\leftarrow\!\mathcal{N}_{i+1}^\prime\!\cup\!\{\bddf^v(\Phi)\}$, $\Phi.\bddf\!\leftarrow\!\bddf^v(\Phi)\!\in\!\mathcal{N}_{i+1}^\prime$}
      }
    }
  }
  $\Pmap[\Phi]\leftarrow1$ for $\Phi\in\mathcal{N}_1^\prime$, $\Pmap[\Phi]=0$ for $\Phi\in\mathcal{N}_i^\prime$ $(i\geq 2)$\;
  \For(\tcp*[f]{Top-down DP}){$i \leftarrow 1\ \KwTo\ m$}{
    \ForEach(\tcp*[f]{$\Pmap[\Phi]$ stores $\prob{\Phi}$}){$\Phi\in\mathcal{N}_i^\prime$}{
      $\Pmap[\Phi.\bddlo]\pluseq (1-p_{e_i})\cdot\Pmap[\Phi]$\;
      $\Pmap[\Phi.\bddhi]\pluseq p_{e_i}\cdot\Pmap[\Phi]$\;
    }
  }
  \Return $\Pmap[\top]$ \tcp*{$\prob{S\reachto v}=\prob{\top}$}
  \caption{Previous algorithm~\cite{maehara17}}
  \label{alg:previous}
}
\end{algorithm}
\end{minipage}
\begin{minipage}{0.43\columnwidth}
\centering
\includegraphics[keepaspectratio, scale=1.2]{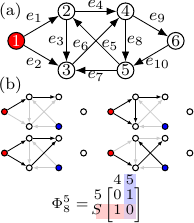}
\caption{(a) Example of input graph $G$. Red vertex is included in $S$. (b) Example of different edge subsets of $E_{<8}$ leading to the same TC $\Phi_8^5$.}\label{fig:example}
\end{minipage}
\end{figure}

Moreover, $\Phi_i^v(\cdot)$ has the following property: if $\Phi_i^v(E^\prime)=\Phi_i^v(F^\prime)$ for $E^\prime,F^\prime\subseteq E_{<i}$, both $\Phi_{i+1}^v(E^\prime)=\Phi_{i+1}^v(F^\prime)$ and $\Phi_{i+1}^v(E^\prime\cup\{e_i\})=\Phi_{i+1}^v(F^\prime\cup\{e_i\})$ hold.
Thus, there exist transition functions $\bddlo^v,\bddhi^v$ from $\Phi_i^v$ to $\Phi_{i+1}^v$ satisfying $\bddlo^v(\Phi_i^v(E^\prime))=\Phi_{i+1}^v(E^\prime)$ and $\bddhi^v(\Phi_i^v(E^\prime))=\Phi_{i+1}^v(E^\prime\cup\{e_i\})$ for any $E^\prime\subseteq E_{<i}$.
By combining this fact and (\ref{eq:singledecomp}), we have, for any $\Phi=\Phi_i^v(E^\prime)\ (E^\prime\subseteq E_{<i})$,
\begin{equation}
  \prob{S\reachto v\mid \Phi}=p_e\cdot\prob{S\reachto v\mid\bddhi^v(\Phi)}+(1-p_e)\cdot\prob{S\reachto v\mid \bddlo^v(\Phi)}. \label{eq:phidecomp}
\end{equation}
Using this decomposition recursively, $\prob{S\reachto v}$ can be decomposed as follows:
\begin{align}
  \prob{S\reachto v} & = \prob{S\reachto v\mid\Phi_1^v(\emptyset)} \nonumber \\
  & = p_{e_1}\cdot\prob{S\reachto v\mid\bddhi^v(\Phi_1^v(\emptyset))}+(1\!-\!p_{e_1})\cdot\prob{S\reachto v\mid\bddlo^v(\Phi_1^v(\emptyset))} \nonumber \\
  & = \cdots = \sum_{\Phi\in\mathcal{N}_i}\prob{\Phi}\cdot\prob{S\reachto v\mid\Phi} = \cdots, \label{eq:expansion2} \\
  \text{where}\quad & \prob{\Phi} \coloneqq\ \ \smashoperator[l]{\sum_{E^\prime\subseteq E_{<i}:\Phi_i^v(E^\prime)=\Phi}}\probi{E^\prime},\quad \mathcal{N}_i\coloneqq\{\Phi_i^v(E^\prime)\mid E^\prime\subseteq E_{<i}\}. \label{eq:TCprob}
\end{align}
In other words, $\prob{\Phi}$ is the probability that TC $\Phi$ appears and $\mathcal{N}_i$ is the set of all TCs of the subsets of $E_{<i}$.
We later show that $|\mathcal{N}_i|$ is bounded by a constant if the pathwidth is bounded in Lemma~\ref{lem:pattern_phi}.
This is the main reason to achieve $\order{m}$ time complexity in computing $\prob{S\reachto v}$.

For some TCs $\Phi$, we can confirm that $S\reachto v$ must hold or must not hold, i.e., $\prob{S\reachto v\mid\Phi}=1$ or $0$.
For such TCs, we do not need to further decompose $\prob{S\reachto v\mid\Phi}$ with (\ref{eq:phidecomp}).
We define such base cases with the following $\top$-pruning and $\bot$-pruning as follows.
Let $i_u$ be the largest index $i$ such that one of the endpoints of $e_{i}$ is $u$ and let $i_S\coloneqq\max_{u\in S}i_u$.
\begin{itemize}
  \item If $\bddhi^v(\Phi)_{Sv}=1$, determining the presence of $e_i$ confirms $S\reachto v$, i.e., $\prob{S\reachto v\mid\bddhi^v(\Phi)}=1$.
  In such a case, we let $\bddhi^v(\Phi)=\top$, a special TC meaning that $S\reachto v$ must hold ($\top$-pruning).
  \item Suppose that $i\geq i_S$ and $\bddlo^v(\Phi)_{Su}=0$ for any $u$, or that $i\geq i_v$ and $\bddlo^v(\Phi)_{uv}=0$ for any $u$.
  In these cases, no further vertices can be reached from $S$ or reach $v$ by determining the absence of $e_i$, i.e., $\prob{S\reachto v\mid\bddlo^v(\Phi)}=0$.
  Thus, we let $\bddlo^v(\Phi)=\bot$, which is a special TC meaning that $S\reachto v$ must not hold ($\bot$-pruning)\footnote{Although the original version~\cite{maehara17} uses different pruning rules, they require a pre-computation of transitive closures of $(V,E_{\geq i})$, which takes cubic time. We use an alternative pruning rule that can be checked in linear time.}.
  We also let $\bddhi^v(\Phi)=\bot$ if $\top$-pruning does not occur and $\bddhi^v(\Phi)$ satisfies the same condition as above.
\end{itemize}

Eventually, all the TCs are transformed into either $\top$ or $\bot$, and (\ref{eq:expansion2}) reaches $\prob{S\reachto v}=\prob{\top}\cdot 1+\prob{\bot}\cdot 0=\prob{\top}$.
Thus, computing $\prob{\top}$ is sufficient for evaluating $\prob{S\reachto v}$.

\subsection{Procedure and Complexity}
Now we describe the procedure to compute $\prob{\top}$.
We again use the transition of TCs.
By considering the pruning described above, we redefine the set of TCs as $\mathcal{N}^\prime_1\coloneqq\{\Phi_1^v(\emptyset)\}$ and $\mathcal{N}^\prime_i\coloneqq\left(\bigcup_{\Phi\in\mathcal{N}_{i-1}^\prime}\{\bddhi^v(\Phi),\bddlo^v(\Phi)\}\right)\setminus\{\top,\bot\}$, i.e., we remove the pruned TCs from $\mathcal{N}_i$.
Then, from the definition of $\prob{\Phi}$ (Eq. (\ref{eq:TCprob})), we have
\begin{align}
  \prob{\Phi}=p_i\cdot\sum_{\Phi^\prime\in\mathcal{N}^\prime_{i-1}:\bddhi(\Phi^\prime)=\Phi}\prob{\Phi^\prime}+(1-p_i)\cdot\sum_{\Phi^\prime\in\mathcal{N}^\prime_{i-1}:\bddlo(\Phi^\prime)=\Phi}\prob{\Phi^\prime}. \label{eq:decompphi}
\end{align}
To compute $\prob{\top}$, we can use a similar formula:
\begin{align}
  \prob{\top}=p_i\cdot\sum_{\Phi^\prime\in\mathcal{N}^\prime_{1}\cup\cdots\cup\mathcal{N}^\prime_m:\bddhi(\Phi^\prime)=\top}\prob{\Phi^\prime}+(1-p_i)\cdot\sum_{\Phi^\prime\in\mathcal{N}^\prime_{1}\cup\cdots\cup\mathcal{N}^\prime_m:\bddlo(\Phi^\prime)=\top}\prob{\Phi^\prime}. \label{eq:decomptop}
\end{align}

Starting from $\mathcal{N}_1^\prime=\{\Phi_1^v(\emptyset)\}$, we can generate $\mathcal{N}_{i+1}^\prime$ from $\mathcal{N}_i^\prime$ recursively using transition functions $\bddhi^v,\bddlo^v$.
By recording the transition, we can compute every $\prob{\Phi}$ and $\prob{\top}$ with (\ref{eq:decompphi}) and (\ref{eq:decomptop}).

Algorithm~\ref{alg:previous} formally presents the procedures.
Lines 1--7 recursively generate the successive TCs using transition functions $\bddlo^v,\bddhi^v$ while executing $\bot,\top$-pruning.
This process can be seen as constructing a diagram of TCs as in Figure~\ref{fig:diagram}.
Every $\Phi$ except for $\bot,\top$ has two outgoing edges heading to $\bddlo^v(\Phi)$ and $\bddhi^v(\Phi)$.
In Algorithm~\ref{alg:previous}, each $\Phi$ has two entries $\Phi.\bddlo$ and $\Phi.\bddhi$ that store the pointer to $\bddlo^v(\Phi)$ and $\bddhi^v(\Phi)$.
Here, $\mathcal{N}_i^\prime$ stores the TCs of the subsets of $E_{<i}$ as a hash map; pruned TCs are not stored.
Along this diagram, we can compute every $\prob{\Phi}$ with simple dynamic programming (DP) in lines 8--13; this simulates the formulas (\ref{eq:decompphi}) and (\ref{eq:decomptop}).
The array $\Pmap[\Phi]$ for $\Phi\in\mathcal{N}_i^\prime$ is prepared to store $\prob{\Phi}$.
Note that even if an identical TC $\Phi$ appears in $\mathcal{N}_i^\prime$ and $\mathcal{N}_j^\prime$ $(i\neq j)$, we distinguish $\Phi\in\mathcal{N}_i^\prime$ and $\Phi\in\mathcal{N}_j^\prime$ in computing $\Pmap[\cdot]$.
After computing all $\Pmap[\Phi]$, $\prob{S\reachto v}=\prob{\top}$ is stored in $\Pmap[\top]$.

Next, we analyze the complexity of this algorithm.
For each TC $\Phi$, computing $\bddlo^v(\Phi)$ and $\bddhi^v(\Phi)$ takes $\order{(|W_i|+|W_{i+1}|)^2}$ time; details are in Lemma~\ref{lem:transition} of Section~\ref{ssec:complexity}.
If every $\mathcal{N}_i^\prime$ is implemented as a hash map, storing or searching a TC also takes $\order{|W_i|^2}$ time.
The number of patterns on $\Phi_i^v$ can be bounded by $\order{2^{|W_i|^2}}$;
since this was not rigorously proved~\cite{maehara17}, we prove it in Lemma~\ref{lem:pattern_phi} of Section~\ref{ssec:complexity}.
The overall complexity can be bounded by $\order{\sum_i (|W_i|+|W_{i+1}|)^2\cdot 2^{|W_i|^2}}=\order{m\omega^2\cdot 2^{\omega^2}}$, where $\omega=\max_i|W_i|$.

In fact, $\omega$ is related to the pathwidth of a graph.
Given a path decomposition of $G$ with width $\omega_p$, we can generate an edge ordering with $\omega\leq \omega_p$ in linear time~\cite{inoue2016pathwidth}.
Since we can construct a path decomposition in linear time for a graph with bounded pathwidth~\cite{bodlaender96tw}, we have a linear time algorithm for computing $\prob{S\reachto v}$ on a graph with bounded pathwidth.
However, computing $\sigma(S)$ requires $\order{mn\omega^2\cdot 2^{\omega^2}}$ time by computing $\prob{S\reachto v}$ for $v\in V\setminus S$.

\begin{figure}[!tb]
    \centering
    \includegraphics[keepaspectratio,scale=1.05]{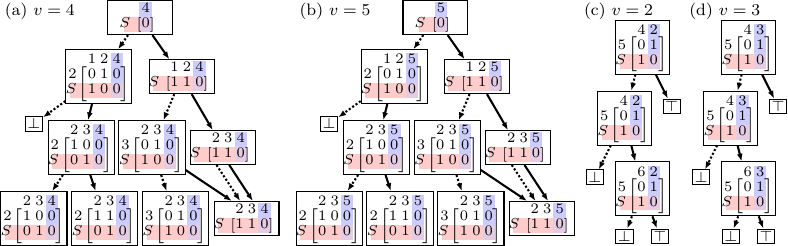}
    \caption{(a)(b) Upper four levels of diagrams constructed from graph in Figure~\ref{fig:example}a with $v=4$ and $v=5$. (c)(d) Parts of lower levels of diagrams constructed from graph in Figure~\ref{fig:example}a with $v=2$ and $v=3$; (c) depicts part below $\Phi_8^2(\{e_2,e_5,e_6\})$ and (d) depicts part below $\Phi_8^3(\{e_1,e_4,e_7\})$. Solid arcs indicate connection from $\Phi$ to $\Phi.\bddhi$, while dashed arcs indicate connection from $\Phi$ to $\Phi.\bddlo$.}
    \label{fig:diagram}
\end{figure}

\section{Linear-Time Influence Spread Computation}
\label{sec:influence}
We now describe the proposed algorithm for Problem~\ref{prob:influence}.
First, we explain the high-level concept of the proposed algorithm.
In computing $\sigma(S)$ with the algorithm described in Section~\ref{sec:past}, different diagrams of TCs are generated for different $v\in V\setminus S$ in computing $\prob{S\reachto v}$.
To argue the similarities among these diagrams, we divide each of them into three parts: \emph{upper levels}, \emph{lower levels}, and \emph{middle levels}.
More specifically, let $A_i\subseteq V$ be the vertices that only appear in $E_{<i}$ and do not appear in $E_{\geq i}$, and $B_i\subseteq V$ be those that only appear in $E_{\geq i}$ and do not appear in $E_{<i}$.
$A_i$, $B_i$, and $W_i$ (Definition~\ref{def:frontier}) constitute a partition of $V$.
Moreover, since $E_{<i}\subset E_{<i+1}$ and $E_{\geq i}\supset E_{\geq i+1}$ for $i=1,\ldots,m$, $V=B_1\supseteq \cdots\supseteq B_{m+1}=\emptyset$ and $\emptyset=A_1\subseteq\cdots\subseteq A_{m+1}=V$.
Thus, by fixing $v$, the diagram of TCs $\Phi_i^v(\cdot)$ made from the reachability to $v$ can be divided into three parts according to $i$: levels $i$ s.t. $v\in B_i$ (upper levels), those s.t. $v\in W_i$ (middle levels), and those s.t. $v\in A_i$ (lower levels).

We first find that upper levels of these diagrams exhibit an identical structure, i.e., diagrams of $\Phi_i^v$ and $\Phi_i^w$ exhibit an identical structure as long as $v,w\in B_i$ (Observation~\ref{obs:top}); we extract the essential information to build this identical diagram as a \emph{shared transversal configuration} (STC) $\Psi=\Psi_i(E^\prime)$.
We can also confirm that middle levels of these diagrams were also governed by the STCs (Lemma~\ref{lem:equiv2}).
Thus, by considering STCs $\Psi_i(\cdot)$, we can have a similar formula to (\ref{eq:expansion2}) using the STCs when $v\in B_i\cup W_i$:
\begin{align}
  & \prob{S\reachto v} = \sum_{\Psi\in\mathcal{M}_i}\prob{\Psi}\cdot\prob{S\reachto v\mid\Psi}, \label{eq:expansion3} \\
  & \text{where}\quad
  \prob{\Psi} \coloneqq\ \ \smashoperator[l]{\sum_{E^\prime\subseteq E_{<i}:\Psi_i(E^\prime)=\Psi}}\probi{E^\prime},\quad\mathcal{M}_i\coloneqq\{\Psi_i(E^\prime)\mid E^\prime\subseteq E_{<i}\}. \nonumber
\end{align}

For every $v$, there exists $i$ such that $v\in B_i\cup W_i$.
Thus, for every $v$, if $\prob{\Psi}$ and $\prob{S\reachto v\mid\Psi}$ are efficiently computed for all $\Psi=\Psi_i(\cdot)$ with the above $i$, we can compute $\prob{S\reachto v}$ with (\ref{eq:expansion3}).
Here, since $\prob{\Psi}$ has a similar recursion as $\prob{\Phi}$ (Lemma~\ref{lem:pdp}), all $\prob{\Psi}$s can be computed in a manner similar to that used in Section~\ref{sec:past} by building a diagram of STCs from $i=1$ to $m+1$.
Since $|\mathcal{M}_i|$ is constant for a graph with bounded pathwidth (Lemma~\ref{lem:pattern_psi}), it can be performed in $\order{m}$ time.

Indeed, when $v\in W_i$ (middle levels), $\prob{S\reachto v\mid\Psi}$ for $\Psi=\Psi_i(\cdot)$ can also be computed efficiently.
To compute this, we use the lower levels of the diagram of TCs $\Psi_i^v(\cdot)$.
Although naively constructing the diagrams of TCs $\Phi$ for every $v$ costs $\order{mn}$ time, we find that the lower levels of these diagrams have many identical substructures (Lemma~\ref{lem:equiv1}).
By using this, we can build all possible substructures of diagrams that appear in the lower level of these diagrams in $\order{m}$ time.
In addition, we again use middle levels' equivalence (Lemma~\ref{lem:equiv2}) to compute $\prob{S\reachto v\mid\Psi}$ on middle levels in $\order{m}$ time.
Here we use the recursion on $\prob{S\reachto v\mid\Psi}$ (Lemmas~\ref{lem:rdp1} and \ref{lem:rdp2}).

Eventually, for every $v$ and $i$ where $v\in W_i$, we obtain $\prob{\Psi}$ and $\prob{S\reachto v\mid\Psi}$ for $\Psi=\Psi_i(\cdot)$.
Thus, by choosing $i$ such that $v\in W_i$, we can compute $\prob{S\reachto v}$ with (\ref{eq:expansion3}).
Since (\ref{eq:expansion3}) contains $\order{|\mathcal{M}_i|}$ terms, which is constant for bounded-pathwidth graphs, the computation of $\prob{S\reachto v}$ for every $v$ costs $\order{n}$ time in total.
The overall time complexity is bounded by $\order{m+n}$ for a graph with bounded pathwidth.
Note that, for every $v$, there exists $i$ such that $v\in W_i$ if $v$ has degree more than $1$.
For vertices with degree $1$, such $i$ might not exist, but we can easily compute $\prob{S\reachto v}$ for such vertices, as described later.

In the following, we first identify the similarities in the previous algorithm~\cite{maehara17} for different $v$ in Section~\ref{ssec:equiv}.
Then, we derive DP formulas for computing $\prob{S\reachto v}$ in Section~\ref{ssec:sketch}.
Finally, we explain the main procedures and the complexity in Sections~\ref{ssec:procedures} and \ref{ssec:complexity}.

\subsection{Equivalence of Diagrams for Different Vertices}
\label{ssec:equiv}
We formally state the equivalence of different diagrams of TCs for upper levels ($v\in B_i$), lower levels ($v\in A_i$), and middle levels ($v\in W_i$).

\subparagraph*{Upper level equivalence.}
First, we consider the case $v,w\in B_i$.
No edges in $E_{<i}$ are incident to $v$, meaning that no vertex in $W_i$ can reach $v$ on $(V,E_{<i})$.
Furthermore, no vertex in $S$ can reach $v$ on $(V,E_{<i})$.
Therefore, for any $E^\prime\subseteq E_{<i}$, $\Phi_i^v(E^\prime)_{uv}=0$ for any $u$.
This means we need to consider the reachability relations among only $W_i\cup\{S\}$, not $W_i\cup\{S,v\}$.
\begin{definition}
  Let $\Psi_i(E^\prime)$ be a binary matrix indexed by $(W_i\setminus W_i^{S\reachto}(E^\prime)\cup\{S\})\times W_i$, where $\Psi_i(E^\prime)_{ux}=1$ if and only if $u\reachto_{(V,E^\prime)}x$.
  Here, $S$ is treated as a special (single) row index.
  We call $\Psi_i(\cdot)$ a \emph{shared transversal configuration} (STC).
\end{definition}

Since $v\in B_i$, we can recover $\Phi_i^v(E^\prime)$ from $\Psi_i(E^\prime)$ by $\Phi_i^v(E^\prime)_{ux}=\Psi_i(E^\prime)_{ux}$ for any $u$ as well as $x\neq v$ and $\Phi_i^v(E^\prime)_{uv}=0$ for any $u$.
In other words, the diagram structure is completely determined by the transition of $\Psi_i(\cdot)$ instead of $\Phi_i^v(\cdot)$ when $v\in B_i$.
The above argument also holds for the other $w\in B_i$.
Therefore, the following observation holds.
\begin{observation}
  \label{obs:top}
  Consider the diagrams generated using Algorithm~\ref{alg:previous} with different $v,w$.
  If $v,w\in B_i$, the diagrams above any TCs $\Phi_i^v(E^\prime)$ and $\Phi_i^w(E^\prime)$ are identical.
  Moreover, we can instead use STC $\Psi_i(\cdot)$ in building the diagram.
\end{observation}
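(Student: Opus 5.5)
We would prove Observation~\ref{obs:top} by induction on the level $j\le i$, showing that for every $E^\prime\subseteq E_{<j}$ the TC $\Phi_j^v(E^\prime)$ is obtained from $\Psi_j(E^\prime)$ by a fixed, $v$-independent embedding. Concretely, we would define the map $\iota_v$ that sends an STC $\Psi$ to the matrix with row index set that of $\Psi$, column index set $W_j\cup\{v\}$, entries $\Psi_{ux}$ for $x\in W_j$, and a zero column at $v$. We would then show $\Phi_j^v(E^\prime)=\iota_v(\Psi_j(E^\prime))$ whenever $v\in B_j$.

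The key facts are the following. Since $B_1\supseteq\cdots\supseteq B_i$, we have $v\in B_j$ for all $j\le i$, so no edge of $E_{<j}$ touches $v$. Hence $v$ is not reachable in $(V,E^\prime)$ from anything other than itself. Since $v\notin S$ and $v\notin W_j$, the column $v$ of $\Phi_j^v(E^\prime)$ is identically zero and $W_j^{\reachto v}(E^\prime)=\emptyset$, so the column index set is exactly $W_j\cup\{v\}$. The row index set $W_j\setminus W_j^{S\reachto}(E^\prime)\cup\{S\}$ does not depend on $v$ at all and coincides with that of $\Psi_j(E^\prime)$. Therefore $\iota_v$ is injective, and
\[
\Phi_j^v(E^\prime)=\Phi_j^v(F^\prime)\iff\Psi_j(E^\prime)=\Psi_j(F^\prime),
\]
and the same equivalence holds with $w$ in place of $v$.

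Next we would handle the transitions and pruning. For $j<i$ we have $v\in B_{j+1}$, so $\bddlo^v$ and $\bddhi^v$ act as $\iota_v\circ\bddlo\circ\iota_v^{-1}$ and $\iota_v\circ\bddhi\circ\iota_v^{-1}$ for transition maps $\bddlo,\bddhi$ on STCs, which are well defined by the equivalence above. We must also check that pruning is governed by $\Psi$ alone. $\top$-pruning requires $\bddhi^v(\Phi)_{Sv}=1$, which is impossible because $v\in B_{j+1}$ is still untouched. For $\bot$-pruning, the condition ``$j\ge i_v$'' fails because $v$ still has incident edges in $E_{\geq i}$. The remaining condition, $j\ge i_S$ together with row $S$ of $\bddlo^v(\Phi)$ being zero, depends only on the $\Psi$-part, since the $v$-entry is zero anyway. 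This is the step needing the most care: one must confirm that the $\bot$-rule's column range includes $v$ harmlessly and that no $v$-specific information enters.

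With these facts in hand, the diagrams for $v$ and $w$ above level $i$ are images under $\iota_v$ and $\iota_w$ of one diagram of STCs built from $\Psi_1(\emptyset)$ by $\bddlo$, $\bddhi$ with identical pruning. Consequently the diagrams above $\Phi_i^v(E^\prime)$ and $\Phi_i^w(E^\prime)$ are isomorphic, with nodes identified via $E^\prime$, and building them from STCs suffices.
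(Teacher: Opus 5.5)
Your proposal is correct and takes the same approach as the paper. The paper likewise observes that for $v\in B_i$ the column $v$ of $\Phi_i^v(E^\prime)$ is identically zero, so $\Phi_i^v(E^\prime)$ is recovered from $\Psi_i(E^\prime)$ by appending that zero column, and hence the diagram is governed by the transitions of $\Psi_i(\cdot)$ equally for every $w\in B_i$. Your explicit induction over levels $j\le i$ (via $B_1\supseteq\cdots\supseteq B_i$) and your check that $\top$-pruning cannot fire and that the $i_v$-branch of $\bot$-pruning stays inactive (since $i_v\ge i>j$) fill in details the paper leaves implicit.
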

\begin{example}
  Figures~\ref{fig:diagram}a and \ref{fig:diagram}b show the upper four levels of the diagram generated using Algorithm~\ref{alg:previous} with the graph of Figures~\ref{fig:example}a and $v=4$ or $v=5$.
  Since $4,5\in B_4$, the first four levels are identical, as depicted in these figures.
\end{example}

\subparagraph*{Lower level equivalence.}
Next, we consider the case $v,w\in A_i$.
Recall that $\Phi_i^v(\cdot)$ captures the reachability among $W_i\cup\{S,v\}$, and $\Phi_i^{w}(\cdot)$ captures those among $W_i\cup\{S,w\}$.
Since $v,w\in A_i$, all edges incident to $v$ and $w$ are in $E_{<i}$, meaning that all incident edges are determined when the edges in $E_{<i}$ are determined.
Thus, intuitively, if $\Phi_i^v$ and $\Phi_i^{w}$ are ``identical,'' i.e., the reachability relations are identical by substituting $v$ with $w$, $S\reachto v$ and $S\reachto w$ can be regarded as equivalent events.
We formally define such an equivalence.
\begin{definition}
For $E^\prime,F^\prime\subseteq E_{<i}$, consider two TCs $\Phi_i^v(E^\prime)$ and $\Phi_i^{w}(F^\prime)$, where $v,w\in A_i$.
We say these TCs are \emph{identical} if $\Phi_i^v(E^\prime)_{ux}=\Phi_i^{w}(F^\prime)_{ux}$ for any $u$ and $x\neq v,w$ and $\Phi_i^v(E^\prime)_{uv}=\Phi_i^{w}(F^\prime)_{uw}$ for any $u$.
This means that $\Phi_i^{v}(E^\prime)$ and $\Phi_i^{w}(F^\prime)$ are identical matrices, except for the column indices being different.
\end{definition}

\begin{restatable}{lemma}{Equivone}
  \label{lem:equiv1}
  Suppose $\Phi_i^v(E^\prime)$ and $\Phi_i^{w}(F^\prime)$ are identical for $v,w\in A_i$ and $E^\prime,F^\prime\subseteq E_{<i}$.
  Then, for any $H\subseteq E_{\geq i}$, $S\reachto_{(V,E^\prime\cup H)}v$ if and only if $S\reachto_{(V,F^\prime\cup H)}w$.
\end{restatable}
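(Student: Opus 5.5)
By symmetry it suffices to prove one direction: if $S\reachto_{(V,E^\prime\cup H)}v$, then $S\reachto_{(V,F^\prime\cup H)}w$. The converse follows by swapping the roles of $(E^\prime,v)$ and $(F^\prime,w)$.

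Fix a directed path $P$ from some $s\in S$ to $v$ in $(V,E^\prime\cup H)$. Since $v\in A_i$, every edge incident to $v$ lies in $E_{<i}$, so the last edge of $P$ is in $E^\prime$. We treat two cases.

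\emph{Case 1: $P$ uses no edge of $H$.} Then $S\reachto_{(V,E^\prime)}v$, so $\Phi_i^v(E^\prime)_{Sv}=1$. By identity of the TCs, $\Phi_i^w(F^\prime)_{Sw}=1$, that is, $S\reachto_{(V,F^\prime)}w$, and we are done.

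\emph{Case 2: $P$ uses some edge of $H$.} Every edge of $H\subseteq E_{\geq i}$ has both endpoints in $W_i\cup B_i$. Let $x$ be the last vertex of $P$ that is the head of an $H$-edge; then $x\in W_i$. The suffix of $P$ from $x$ to $v$ lies entirely in $E^\prime$, so $x\reachto_{(V,E^\prime)}v$. Cut $P$ at its frontier vertices into segments; each segment is either a single $H$-edge or a maximal run of $E^\prime$-edges. A run inside $E_{<i}$ that starts at a vertex of $W_i\cup B_i$ and is nonempty must start in $W_i$, because vertices of $B_i$ have no $E_{<i}$-edges. Likewise, an $E^\prime$-run that is followed by an $H$-edge ends at a vertex of $W_i$.

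We now rebuild the path in $(V,F^\prime\cup H)$, working backwards from $v$. Let $y$ be the latest vertex on $P$ with $S\reachto_{(V,E^\prime)}y$ via a prefix of $P$ lying in $E^\prime$, or take $y$ to be the starting seed $s$ if no such prefix exists. We distinguish two subcases for where the path goes next.

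\emph{Subcase (a): $x\notin W_i^{\reachto v}(E^\prime)$.} Then $x$ is a column index of $\Phi_i^v(E^\prime)$, and the entry for $(x,v)$ is $1$. By identity, $x\reachto_{(V,F^\prime)}w$. This needs $x$ to also be a column index in $\Phi_i^w(F^\prime)$, which holds because index sets agree under identity.

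\emph{Subcase (b): $x\in W_i^{\reachto v}(E^\prime)$.} Then $x$ is not itself a column; its reachability to $v$ is encoded only through the $(\cdot,v)$ column. We handle this as in the paper's argument for the earlier lemma on equal TCs. Take the segment ending at $x$; it is either an $H$-edge $y'\to x$ or begins at the $E^\prime$-run start $z$. We then argue that $F^\prime$ has the same property, namely that $x\in W_i^{\reachto w}(F^\prime)$, because the column index sets coincide under identity.

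Each intermediate $E^\prime$-run $z\to z'$ with $z,z'\in W_i$ is handled in the same way. Either $z'$ is a column and the $(z,z')$ entry transfers (with $z$ a row index unless $z$ is reachable from $S$, in which case we shortcut via the $S$-row), or $z'$ is not a column, meaning $z'\reachto v$ in $E^\prime$, and then we jump directly to the $(z,v)$ entry and conclude $z\reachto w$ in $F^\prime$. The initial run from $s$ uses the $S$-row. Concatenating the transferred pieces with the unchanged $H$-edges gives an $S$--$w$ walk in $(V,F^\prime\cup H)$.

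The main obstacle is the bookkeeping for the deleted rows and columns. Rows of vertices reachable from $S$ and columns of vertices that reach $v$ are omitted from the matrix, so in those cases we must shortcut: we restart from the $S$-row, or finish through the $v$-column. The cleanest way to organize this is an induction on the number of $H$-edges in $P$, always choosing the last index along $P$ where such a shortcut is available.
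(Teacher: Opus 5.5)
Your proof is correct and follows essentially the same route as the paper. You split the walk at frontier vertices into $E^\prime$-runs and $H$-runs, keep the $H$-runs, and replace each $E^\prime$-run by an $F^\prime$-walk read off the corresponding matrix entry, shortcutting through the $S$-row or the $v$-column when an endpoint's row or column has been deleted. The paper does this shortcutting once, globally, by choosing a walk whose intermediate contact vertices avoid $W_i^{S\reachto}(E^\prime)$ and $W_i^{\reachto v}(E^\prime)$. You do it locally, also using that identity forces $W_i^{S\reachto}(E^\prime)=W_i^{S\reachto}(F^\prime)$ and $W_i^{\reachto v}(E^\prime)=W_i^{\reachto w}(F^\prime)$.

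As a cleanup, your Subcase (a) can never occur: $x\in W_i$ and $x\reachto_{(V,E^\prime)}v$ already force $x\in W_i^{\reachto v}(E^\prime)$, and an entry $(x,v)$ would use $x$ as a row index, not a column index. So only Subcase (b) is needed, and the unused vertex $y$ can be dropped.
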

\begin{proof}
  Fix $H\subseteq E_{\geq i}$ and suppose that $S\reachto_{(V,E^\prime\cup H)}v$.
  There is then a directed walk $P$ from some $s\in S$ to $v$ in $(V,E^\prime\cup H)$.
  The walk $P$ can be divided into small walks $P=P_0P_1P_2\cdots P_{2k-1}P_{2k}$ satisfying the conditions that $P_{2j}$ consists of only the edges in $E^\prime\subseteq E_{<i}$ and that $P_{2j+1}$ consists of only the edges in $H\subseteq E_{\geq i}$.
  Let $p_j$ be the contact vertex of $P_j$ and $P_{j+1}$.
  Note that $P_0$ may be empty when $s\in W_i\cup B_i$; in such a case, we set $p_0=s$.
  Since $p_j\ (j\geq 1)$ is incident to both edges in $E^\prime\subseteq E_{<i}$ and those in $H\subseteq E_{\geq i}$, $p_j\in W_i$.
  If $P_0$ is nonempty, $p_0$ is also in $W_i$.
  We can also assume that $p_1,p_2,\ldots,p_{2k-2}\notin W_i^{S\reachto}(E^\prime)$; otherwise, we can take a directed walk $P^\star$ from some $s\in S$ to $p_{j}$ within $(V,E^\prime)$ and consider a walk $P^\star P_{j+1}\cdots P_{2k}$ instead.
  With a similar argument, we can assume that $p_1,p_2,\ldots,p_{2k-2}\notin W_i^{\reachto v}(E^\prime)$.
  Now, we constitute a directed walk $P^\prime$ from some $s^\prime\in S$ to $w$ in $(V,F^\prime\cup H)$.
  First, when $P_0$ is not empty, $\Phi_i^v(E^\prime)_{Sp_0}=1$.
  Since $\Phi_i^v(E^\prime)$ and $\Phi_i^{w}(F^\prime)$ are identical, $\Phi_i^{w}(F^\prime)_{Sp_0}=1$; thus, there is a directed walk $P^\prime_0$ from some $s^\prime\in S$ to $p_0$ in $(V,F^\prime)$.
  This also holds when $P_0$ is empty; in such a case, $s^\prime=s=p_0$ and $P^\prime_0$ is also empty.
  Next, for $1\leq j\leq k-1$, $P_{2j}$'s existence and identity indicate $\Phi_i^v(E^\prime)_{p_{2j-1}p_{2j}}=1=\Phi_i^w(F^\prime)_{p_{2j-1}p_{2j}}$, meaning that there is a directed walk $P^\prime_{2j}$ from $p_{2j-1}$ to $p_{2j}$ in $(V,F^\prime)$.
  Finally, $P_{2k}$'s existence and identity indicate $\Phi_i^v(E^\prime)_{p_{2k-1}v}=1=\Phi_i^w(F^\prime)_{p_{2k-1}w}$, meaning that there is a directed walk $P_{2k}^\prime$ from $p_{2k-1}$ to $w$ in $(V,F^\prime)$.
  Now $P^\prime=P_0^\prime P_1 P_2^\prime\cdots P_{2k-1}P_{2k}^\prime$ is a directed walk from $s^\prime\in S$ to $w$ in $(V,F^\prime\cup H)$, indicating $S\reachto_{(V,F^\prime\cup H)} w$.
  
  In the same manner, we can show the reverse direction.
  Thus, the lemma holds.
\end{proof}
This indicates that $\prob{S\reachto v\mid\Phi_i^v}=\prob{S\reachto w\mid\Phi_i^w}$ when $\Phi_i^v$ and $\Phi_i^w$ are identical.

\begin{observation}
  \label{obs:bottom}
  Consider the diagrams generated using Algorithm~\ref{alg:previous} with different $v,w$.
  If $v,w\in A_i$ and identical TCs $\Phi_i^v,\Phi_i^w$ appear in each diagram, the diagram structures below $\Phi_i^v$ and below $\Phi_i^w$ are identical.
\end{observation}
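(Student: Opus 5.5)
The plan is to show that the two diagrams evolve in lockstep under a renaming of the column index $v$ to $w$. Write $\pi$ for the map that sends a TC at level $j\geq i$ in the $v$-diagram to the matrix obtained by renaming column $v$ to $w$. The goal is to prove by induction on $j=i,i+1,\ldots,m$ the following claim: whenever $E^{\prime\prime}\subseteq E_{<j}$ and $F^{\prime\prime}\subseteq E_{<j}$ extend $E^\prime$ and $F^\prime$ by the same set of edges from $\{e_i,\ldots,e_{j-1}\}$, the TCs $\Phi_j^v(E^{\prime\prime})$ and $\Phi_j^w(F^{\prime\prime})$ are identical. Moreover, the pruning to $\top$ or $\bot$ happens for one exactly when it happens for the other. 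Since $v,w\in A_i\subseteq A_j$, both columns remain special non-frontier indices at every later level, so the notion of identical TCs still makes sense for all $j\geq i$.

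For the inductive step, I first verify that the transition functions commute with the renaming: $\bddlo^w(\pi(\Phi))=\pi(\bddlo^v(\Phi))$, and likewise for $\bddhi$. The most robust way to see this is semantic rather than operational. Take the added edge $e_j$, which joins two vertices of $W_j\cup B_j$ and is incident to neither $v$ nor $w$. Every entry of $\Phi_{j+1}^v(E^{\prime\prime}\cup\{e_j\})$ is a reachability statement in $(V,E^{\prime\prime}\cup\{e_j\})$ among $W_{j+1}\cup\{S,v\}$. I decompose any such walk into segments that use $E^{\prime\prime}$ and single uses of $e_j$, exactly as in the proof of Lemma~\ref{lem:equiv1}. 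Each $E^{\prime\prime}$-segment starts and ends at indices of $\Phi_j^v$, namely $S$, a frontier vertex, or $v$. Identity at level $j$ then transfers each segment to an $F^{\prime\prime}$-walk with $v$ replaced by $w$.

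Some care is needed with the row and column index sets. If a frontier vertex $x$ is reachable from $S$, it is dropped from the rows, and such entries are replaced by the $S$-row. If $x$ reaches $v$, it is dropped from the columns and is handled through column $v$. I handle both by noting that these dropped sets are themselves determined by the identical matrices. Specifically, $W^{S\reachto}$ is read off the $S$-row, and $W^{\reachto v}$ corresponds to $W^{\reachto w}$ through column $v$ versus column $w$. Hence the index sets correspond under $\pi$, and the entry-by-entry transfer goes through.

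Next I check that pruning commutes. The $\top$-condition is $\Phi_{Sv}=1$, which is equivalent to $\pi(\Phi)_{Sw}=1$. The $\bot$-conditions refer to $i_S$ and to $i_v$ or $i_w$. Since $v,w\in A_i$, both $i_v<i$ and $i_w<i$, so the condition "$j\geq i_v$" holds for both at every level $j\geq i$. The zero-column test then matches under $\pi$, and the $S$-row test is literally identical.

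Given these two facts, the nodes reachable from $\Phi_i^v(E^\prime)$ correspond bijectively, via $\pi$, to those reachable from $\Phi_i^w(F^\prime)$, with matching lo/hi arcs and matching terminals. In other words, the diagram structures coincide. Lemma~\ref{lem:equiv1} is consistent with this, since it shows that the conditional probabilities agree, but the structural claim needs the commuting step above.

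The main obstacle I expect is the bookkeeping with the dropped rows and columns in the transition, in particular showing that $\pi$ commutes with the transition step. I would handle it by arguing entirely through reachability semantics, reusing the walk-splicing argument from Lemma~\ref{lem:equiv1}, rather than unpacking the concrete transition algorithm.
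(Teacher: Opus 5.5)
Your skeleton (induction over levels, transitions and pruning commuting with the renaming $\pi$, hence isomorphic sub-diagrams) is the right one, and your pruning check is correct. The gap is the inductive claim itself and its ``semantic'' justification. For the literal TCs, identity at level $j$ does not propagate to level $j+1$. The dropped rows ($W_j^{S\reachto}$) and dropped columns ($W_j^{\reachto v}$) discard information that later entries can depend on.

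Your patch that such entries ``are replaced by the $S$-row'' fails. A next-level entry $(a,y)$ with $a$ unreachable from $S$ may be witnessed only by a walk through some $b\in W_j^{S\reachto}$, and $b\reachto y$ is recorded nowhere. Here is a concrete counterexample:
\begin{itemize}
\item Let $S=\{s\}$, $E_{<i}=\{(s,b),(b,y),(s,c),(c,y),(a,v),(a,w)\}$, $e_i=(a,b)$ and $e_{i+1}=(y,a)$. Then $W_i=\{a,b,y\}$, $W_{i+1}=\{a,y\}$ and $v,w\in A_i$.
\item Take $E'=\{(s,b),(b,y),(a,v)\}$ and $F'=\{(s,b),(s,c),(c,y),(a,w)\}$. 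Both TCs have rows $S,a$ and columns $b,y$ and $v$ resp.\ $w$, with $S$-row $(1,1,0)$ and $a$-row $(0,0,1)$. So they are identical, and neither is pruned.
\item After adding $e_i$, $a$ reaches $y$ (via $b$) in $(V,E'\cup\{e_i\})$ but not in $(V,F'\cup\{e_i\})$. Meanwhile $a$ stays unreachable from $S$ and $y$ reaches neither $v$ nor $w$.
\item Hence the $(a,y)$ entries of $\Phi_{i+1}^v(E'\cup\{e_i\})$ and $\Phi_{i+1}^w(F'\cup\{e_i\})$ are $1$ and $0$.
\end{itemize}
Replacing $(a,w)$ by $(a,v)$ in $F'$ shows more: even for a single $v$, the literal TC at level $i+1$ is not a function of the TC at level $i$. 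The diagram's nodes are really the outputs of the transition procedure, not literal TCs.

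The fix is to argue operationally, about the transitions Algorithm~\ref{alg:previous} actually applies, i.e.\ the procedure in the proof of Lemma~\ref{lem:transition}. There $G^\star$ is built on $W_j\cup\{S,v\}$ from the matrix alone. For $j\geq i$ the vertex $v\in A_j$ has three properties: it is never a row index (so it is a sink of $G^\star$), it is not among the added vertices $W_{j+1}\setminus W_j$, and it is not an endpoint of $e_j$.

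Hence every step is equivariant under renaming $v$ to $w$: adding vertices and seed edges, adding $e_j$, taking the transitive closure, and reading off $W_{j+1}^{S\reachto}$, $W_{j+1}^{\reachto v}$ and the new matrix. This gives $\bddf^w(\pi(\Phi))=\pi(\bddf^v(\Phi))$ for every node $\Phi$, with no reference to edge subsets. Together with your pruning argument and the injectivity of $\pi$, induction over the computed nodes yields the isomorphism.

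A semantic route can only be rescued by weakening ``identical''. You would compare just the $S$-row, the $v$-column, and the entries $(x,y)$ with $x\notin W^{\reachto v}$ and $y\notin W^{S\reachto}$; for these, no relevant witnessing walk passes through a dropped vertex. The paper itself gives no separate proof of this observation and leans on Lemma~\ref{lem:equiv1}. As you note, that lemma only yields equality of conditional probabilities, so the structural statement genuinely needs this commuting step.
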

\begin{example}
  Figures~\ref{fig:diagram}c and \ref{fig:diagram}d show parts of the last four levels of the diagrams below $\Phi_8^2(\{e_2,e_5,e_6\})$ $(v=2)$ and below $\Phi_8^3(\{e_1,e_4,e_7\})$ $(v=3)$.
  Since $2,3\in A_8$ and these TCs are identical, these diagram structures are also identical.
\end{example}

\subparagraph*{Middle level equivalence.}
Finally, we examine the case $v,w\in W_i$.
In this case, the TC $\Phi_i^v(\cdot)$ captures the reachability among $W_i\cup\{S,v\}=W_i\cup\{S\}$.
In addition, the STC $\Psi_i(\cdot)$ also captures the reachability among $W_i\cup\{S\}$.
Thus, we can recover $\Phi_i^v(E^\prime)$ from $\Psi_i(E^\prime)$ by removing the columns corresponding to $W_i^{\reachto v}(E^\prime)$.
In other words, if $\Psi_i(E^\prime)=\Psi_i(F^\prime)$, then $E^\prime,F^\prime\subseteq E_{<i}$ satisfy condition (\#).

For $\Psi=\Psi_i(E^\prime)$, let $W_i^{S\reachto}(\Psi)$ be the set of vertices in $W_i$ that can be reached from $S$ under $\Psi$, which is determined by the missing row indices of $\Psi$.
Consider a graph $G_i(\Psi)\coloneqq(W_i\setminus W_i^{S\reachto}(\Psi),E_i(\Psi))$, where $E_i(\Psi)\coloneqq\{(u,x)\mid u,x\in W_i\setminus W_i^{S\reachto}(\Psi),\Psi_{ux}=1\}$.
In other words, when $\Psi=\Psi_i(E^\prime)$, $G_i(\Psi)$ is the subgraph of the transitive closure of $(V,E^\prime)$ induced by $W_i\setminus W_i^{S\reachto}(\Psi)$.
Then, we can prove the following.
\begin{restatable}{lemma}{Equivtwo}
  \label{lem:equiv2}
  Suppose $\Psi_i(E^\prime)=\Psi_i(F^\prime)\eqqcolon\Psi$ for $E^\prime,F^\prime\subseteq E_{<i}$ and $v,w\in W_i\setminus W_i^{S\reachto}(E^\prime)$ are in the same SCC of $G_i(\Psi)$.
  Then, for any $H\subseteq E_{\geq i}$, $S\reachto_{(V,E^\prime\cup H)}v$ if and only if $S\reachto_{(V,F^\prime\cup H)}w$.
\end{restatable}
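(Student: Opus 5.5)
The plan is to reduce the statement to the walk-rerouting argument already used in the proof of Lemma~\ref{lem:equiv1}. First I would show that, for any fixed $H\subseteq E_{\geq i}$ and any $E^\prime$ with $\Psi_i(E^\prime)=\Psi$, the events $S\reachto_{(V,E^\prime\cup H)}v$ and $S\reachto_{(V,E^\prime\cup H)}w$ coincide. Since $v,w$ lie in the same SCC of $G_i(\Psi)$, we have $\Psi_{vw}=\Psi_{wv}=1$. By definition of $\Psi_i(E^\prime)$, this means $v\reachto_{(V,E^\prime)}w$ and $w\reachto_{(V,E^\prime)}v$, and hence also in $(V,E^\prime\cup H)$. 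Any walk from $S$ to $v$ can then be extended by a $v$-$w$ walk, and vice versa. This reduces the lemma to showing that, for a single $v\in W_i\setminus W_i^{S\reachto}(E^\prime)$, $S\reachto_{(V,E^\prime\cup H)}v$ iff $S\reachto_{(V,F^\prime\cup H)}v$. Applying the swap step once in $E^\prime$ and then this equivalence gives the claim.

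For the single-vertex equivalence, I would repeat the decomposition from Lemma~\ref{lem:equiv1}. Take a walk $P$ from some $s\in S$ to $v$ in $(V,E^\prime\cup H)$ and split it as $P_0P_1\cdots$, alternating between maximal segments in $E^\prime$ and in $H$. Every contact vertex lies in $W_i$, and so does $v$. Now replace each $E^\prime$-segment:
\begin{itemize}
\item an initial segment from $s$ to $p_0\in W_i$ is witnessed by $p_0\in W_i^{S\reachto}(E^\prime)=W_i^{S\reachto}(\Psi)$, i.e., by the missing row index of $\Psi$;
\item an interior segment from $p_{2j-1}$ to $p_{2j}$ is witnessed by $\Psi_{p_{2j-1}p_{2j}}=1$;
\item a final $E^\prime$-segment ends at $v\in W_i$, so it is an interior-type segment ending in column $v$, which $\Psi$ retains because its columns are all of $W_i$.
\end{itemize}
Each witness is read off $\Psi$, so it holds equally for $F^\prime$, giving a replacement segment in $(V,F^\prime)$.

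As in Lemma~\ref{lem:equiv1}, I would first shortcut the walk so that no intermediate contact vertex other than $p_0$ lies in $W_i^{S\reachto}(E^\prime)$. That keeps every row index used in the interior segments present in $\Psi$. The alternative is to restart from the last contact vertex reachable from $S$ within $E^\prime$, which is cleaner. If $P_0$ is empty, then $s\in W_i\cup B_i$ and nothing needs replacing at the start. The reverse direction is symmetric.

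The main obstacle is the bookkeeping of index sets. Rows of $\Psi$ exclude $W_i^{S\reachto}$, so I must make sure every interior segment starts at a vertex not reachable from $S$; the restart-at-last-reachable-contact trick handles this. The other point to check is that $v\notin W_i^{S\reachto}$ (given) causes no trouble. Indeed, the segment ending at $v$ either starts at $S$ (covered by the row-$S$ entry, which equals membership of $v$ in $W_i^{S\reachto}$ and is consistent between $E^\prime$ and $F^\prime$) or at a non-reachable $W_i$ vertex.
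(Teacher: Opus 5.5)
Your proposal is correct and follows the paper's own argument. The paper likewise proves $S\reachto_{(V,E^\prime\cup H)}v \iff S\reachto_{(V,F^\prime\cup H)}v$ by reusing the walk-rerouting proof of Lemma~\ref{lem:equiv1}, then uses $v\reachto w$ and $w\reachto v$ to pass between $v$ and $w$. It differs only in doing the $v$--$w$ swap on the $F^\prime$ side (via $\Psi_i(F^\prime)=\Psi$) rather than the $E^\prime$ side, and in leaving implicit the row-index bookkeeping you make explicit: restarting at the last contact vertex in $W_i^{S\reachto}(E^\prime)$ and using $v\notin W_i^{S\reachto}(E^\prime)$.
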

\begin{proof}
  Fix $H\subseteq E_{\geq i}$.
  We can prove the equivalence of $S\reachto_{(V,E^\prime\cup H)}v$ and $S\reachto_{(V,F^\prime\cup H)}v$ from $\Psi_i(E^\prime)=\Psi_i(F^\prime)$ with almost the same argument as that in the proof of Lemma~\ref{lem:equiv1}.
  Since $v$ and $w$ are in the same SCC of $G_i(\Psi)$, $v\reachto_{(V,F^\prime)}w$ and $w\reachto_{(V,F^\prime)}v$.
  Consequently, $S\reachto_{(V,F^\prime\cup H)}v$ indicates $S\reachto_{(V,F^\prime\cup H)}w$ and vice versa.
  This concludes $S\reachto_{(V,E^\prime\cup H)}v\iff S\reachto_{(V,F^\prime\cup H)}w$.
\end{proof}

\subsection{New Decomposition and DP Formulas}
\label{ssec:sketch}
\subparagraph*{Decomposition formula.}
Using the equivalences described above, we derive a formula for $\prob{S\reachto v}$ using the STCs.
Since we confirmed that $\Phi_i^v(E^\prime)$ can be recovered from $\Psi_i(E^\prime)$ when $v\in A_i\cup W_i$, we can derive the decomposition formula (\ref{eq:expansion3}).
If $v\in W_i$, it can be further transformed.
Based on Lemma~\ref{lem:equiv2}, we can rewrite $\prob{S\reachto v\mid\Psi}$ as $\prob{S\reachto C_\Psi(v)\mid\Psi}$, where $C_\Psi(v)$ is the SCC of $G_i(\Psi)$ containing $v$ when $v\notin W_i^{S\reachto}(\Psi)$.
If $v\in W_i^{S\reachto}(\Psi)$, we immediately have $\prob{S\reachto v\mid\Psi}=1$.
Now (\ref{eq:expansion3}) can be further transformed into
\begin{align}
  \prob{S\reachto v} & = \sum_{\Psi\in\mathcal{M}_i}
  \begin{cases}
    \prob{\Psi} & (v\in W_i^{S\reachto}(\Psi)) \\
    \prob{\Psi}\cdot\prob{S\reachto C_\Psi(v)\mid\Psi} & (v\in W_i\setminus W_i^{S\reachto}(\Psi))
  \end{cases}
  . \label{eq:expansion4}
\end{align}
If we can obtain $\prob{\Psi}$ and $\prob{S\reachto C\mid\Psi}$ for every STC $\Psi$ and every SCC $C$ of $G_i(\Psi)$, we can compute every $\prob{S\reachto v}$ by choosing $i$ such that $v\in W_i$ and then use (\ref{eq:expansion4}).
Note that if all vertices have degree not less than $2$, there must exist index $i$ such that $v\in W_i$, since $G$ contains no self-loops.
The $\prob{S\reachto v}$ value for degree $1$ vertices can be easily obtained; details are given in Section~\ref{ssec:procedures}.

\subparagraph*{DP formulas.}
Since TCs can be recovered from STCs and admit transition functions $\bddhi^v,\bddlo^v$, STCs also admit transition functions $\bddlo$ and $\bddhi$ satisfying that, for any $E^\prime\subseteq E_{<i}$, $\bddlo(\Psi_i(E^\prime))=\Psi_{i+1}(E^\prime)$ and $\bddhi(\Psi_i(E^\prime))=\Psi_{i+1}(E^\prime\cup\{e_i\})$.
Here, we set $\bddf(\Psi)=\bot$ for $\bddf\in\{\bddhi,\bddlo\}$ when $i\geq i_S$ and $\bddf(\Psi)_{Su}=0$ for any $u$, which confirms $\prob{S\reachto w\mid\bddf(\Psi)}=0$ for any further vertices $w\in W_{i+1}\cup B_{i+1}$ ($\bot$-pruning).
Now $\prob{\Psi}$ can be computed in the same manner as described in Section~\ref{sec:past}:
From the definition of $\prob{\Psi}$ (Eq. (\ref{eq:expansion3})), we have the following.
\begin{lemma}
  \label{lem:pdp}
  For $\Psi=\Psi_i(E^\prime)$ for some $E^\prime\subseteq E_{<i}$, we have
  \begin{align}
    \prob{\Psi}=p_i\cdot\sum_{\Psi^\prime\in\mathcal{M}^\prime_{i-1}:\bddhi(\Psi^\prime)=\Psi}\prob{\Psi^\prime}+(1-p_i)\cdot\sum_{\Psi^\prime\in\mathcal{M}^\prime_{i-1}:\bddlo(\Psi^\prime)=\Psi}\prob{\Psi^\prime}, \label{eq:pdp}
  \end{align}
  where $\mathcal{M}^\prime_1\coloneqq\{\Psi_1(\emptyset)\}$ and $\mathcal{M}^\prime_i\coloneqq\left(\bigcup_{\Psi\in\mathcal{M}_{i-1}^\prime}\{\bddhi(\Psi),\bddlo(\Psi)\}\right)\setminus\{\bot\}$, i.e., we remove pruned STCs from $\mathcal{M}_i$.
\end{lemma}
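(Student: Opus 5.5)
We prove the formula by grouping the edge subsets $E^\prime\subseteq E_{<i}$ according to their STC, using that $\Psi_i(\cdot)$ is updated edge by edge through the transition functions $\bddlo,\bddhi$. By definition, $\prob{\Psi}=\sum_{E^\prime\subseteq E_{<i}:\Psi_i(E^\prime)=\Psi}\probi{E^\prime}$. Every $E^\prime\subseteq E_{<i}$ can be written uniquely as either $E^{\prime\prime}$ or $E^{\prime\prime}\cup\{e_{i-1}\}$ with $E^{\prime\prime}\subseteq E_{<i-1}$. Accordingly, $\probi{E^\prime}$ equals $(1-p_{e_{i-1}})\cdot\text{Pr}_{<i-1}(E^{\prime\prime})$ in the first case and $p_{e_{i-1}}\cdot\text{Pr}_{<i-1}(E^{\prime\prime})$ in the second. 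The factor written $p_i$ in (\ref{eq:pdp}) is the probability of the edge processed in the transition from level $i-1$ to level $i$, matching the indexing convention of (\ref{eq:decompphi}).

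The first step is to verify that the transition functions are well defined on STCs: if $\Psi_{i-1}(E^{\prime\prime})=\Psi_{i-1}(F^{\prime\prime})$, then $\Psi_i(E^{\prime\prime})=\Psi_i(F^{\prime\prime})$, and the same holds after adding $e_{i-1}$ to both sets. The paper asserts this via recovery of TCs, but I would check it directly. The reachability among $W_i\cup\{S\}$ in $(V,E^{\prime\prime}\cup\{e_{i-1}\})$ is determined by two ingredients:
\begin{itemize}
\item the reachability among $W_{i-1}\cup\{S\}$ in $(V,E^{\prime\prime})$, which is exactly the data stored in $\Psi_{i-1}$;
\item the single new edge $e_{i-1}$, whose endpoints lie in $W_{i-1}\cup B_{i-1}$.
\end{itemize}
This works because every path from $S$ or from a vertex of $W_i$ into the old part enters and leaves through $W_{i-1}$, by the same walk-splitting argument as in Lemma~\ref{lem:equiv1}. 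Vertices of $W_i\setminus W_{i-1}$ are endpoints of $e_{i-1}$ newly introduced from $B_{i-1}$, so their reachability relations come only through $e_{i-1}$.

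Given well-definedness, the sum over $E^\prime$ with $\Psi_i(E^\prime)=\Psi$ splits into two groups:
\begin{itemize}
\item $E^{\prime\prime}$ with $\bddlo(\Psi_{i-1}(E^{\prime\prime}))=\Psi$, each contributing the factor $(1-p)$;
\item $E^{\prime\prime}$ with $\bddhi(\Psi_{i-1}(E^{\prime\prime}))=\Psi$, each contributing the factor $p$.
\end{itemize}
Grouping $E^{\prime\prime}$ by $\Psi^\prime=\Psi_{i-1}(E^{\prime\prime})$ turns each group into $\sum_{\Psi^\prime}\prob{\Psi^\prime}$, which gives (\ref{eq:pdp}) with $\Psi^\prime$ ranging over $\mathcal{M}_{i-1}$.

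The remaining point, and the main obstacle, is to justify restricting the sum to $\mathcal{M}^\prime_{i-1}$, that is, removing pruned ($\bot$) STCs. The required property is that once an STC is sent to $\bot$, no successor of the corresponding subsets can again yield $\Psi\neq\bot$. Such successors are still defined as STCs in $\mathcal{M}_i$, so the statement must be read with $\prob{\Psi}$ restricted to non-pruned mass, i.e., the probability of reaching $\Psi$ along the pruned diagram. I would argue by induction on $i$ that $\prob{\Psi}$ for $\Psi\in\mathcal{M}^\prime_i$ equals the sum of $\probi{E^\prime}$ over those $E^\prime$ whose prefix STCs $\Psi_j(E^\prime\cap E_{<j})$ are never $\bot$ for $j\le i$. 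The recursion above preserves this invariant verbatim, since pruned predecessors contribute nothing.

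Finally, I would note why discarding this mass is harmless for (\ref{eq:expansion3}). Pruning occurs only when $i\ge i_S$ and no frontier vertex is reachable from $S$. In that situation no vertex $w\in W_{i+1}\cup B_{i+1}$ can ever be reached, so these subsets contribute zero to every later $\prob{S\reachto v\mid\Psi}$.
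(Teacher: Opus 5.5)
\textbf{Overall route.} Your route is the same as the paper's. The paper justifies (\ref{eq:pdp}) only by ``from the definition of $\prob{\Psi}$'': split each $E^\prime\subseteq E_{<i}$ by whether it contains $e_{i-1}$, group predecessors by their STC, and use the transition functions $\bddlo,\bddhi$. It inherits their existence from the TC transitions of Section~\ref{sec:past}. Your reading of $p_i$ as $p_{e_{i-1}}$ is correct.

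\textbf{The gap.} It lies in your direct check that $\bddhi$ is well defined. $\Psi_{i-1}(E^{\prime\prime})$ does \emph{not} store the reachability among $W_{i-1}\cup\{S\}$, because the rows of $S$-reachable frontier vertices are deleted. The shortcut of Lemma~\ref{lem:equiv1} cannot compensate: it replaces a prefix by a walk starting in $S$. That is fine for the event $S\reachto v$, but it cannot certify an entry $\Psi_{ux}$ with $u\neq S$.

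The literal claim is in fact false. Let $S=\{s\}$ and $e_1=(s,z)$, $e_2=(z,x)$, $e_3=(s,x)$, $e_4=(u,y)$, $e_5=(y,z)$, $e_6=(u,a)$, $e_7=(x,a)$, $e_8=(s,a)$. Then $W_5=\{s,u,x,y,z\}$ and $W_6=\{s,u,x\}$. Take $E^{\prime\prime}=\{e_1,e_2,e_4\}$ and $F^{\prime\prime}=\{e_1,e_3,e_4\}$. Both have $S$-reachable set $\{s,z,x\}$ and identical rows $S,u,y$, so $\Psi_5(E^{\prime\prime})=\Psi_5(F^{\prime\prime})$. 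After adding $e_5$, row $u$ survives in $\Psi_6$, and $u\reachto x$ holds (via $u\to y\to z\to x$) only for $E^{\prime\prime}$. Hence no function $\bddhi$ satisfies $\bddhi(\Psi_{i-1}(E^{\prime\prime}))=\Psi_i(E^{\prime\prime}\cup\{e_{i-1}\})$, which is exactly what your grouping step uses. The paper's TC-based justification has the same blind spot, since TCs delete the same rows. So the paper is no better here, but your proposed verification does not close the gap.

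\textbf{Repair.} Canonicalize: ignore (e.g., set to $0$) every entry $\Psi_{ux}$ with $u\neq S$ and $x\in W_i^{S\reachto}$. Nothing downstream needs these entries: the shortcut argument only uses entries into non-$S$-reachable vertices, and $G_i(\Psi)$ only uses entries among them. For canonical STCs the transition is a genuine function.
\begin{itemize}
\item If $u,x\notin W_i^{S\reachto}(E^{\prime\prime}\cup\{e_{i-1}\})$, a $u$--$x$ path in the new graph avoids every vertex that was $S$-reachable in $(V,E^{\prime\prime})$; otherwise $x$ would now be $S$-reachable. So the path is composed of stored entries and possibly $e_{i-1}$.
\item The new $S$-row is the old one, enlarged by the stored row of the head of $e_{i-1}$ when its tail was $S$-reachable and its head was not.
\end{itemize}
With this convention your grouping argument goes through verbatim.

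\textbf{Pruning.} Your reinterpretation of $\prob{\Psi}$ as non-pruned mass is unnecessary, because the property you named does hold, by your own last paragraph. A pruned prefix sits at a level $j>i_S$ with $W_j^{S\reachto}=\emptyset$, so all its descendants keep an all-zero $S$-row. In contrast, every member of $\mathcal{M}^\prime_i$ with $i>i_S$ has a nonzero $S$-row, or it would have been pruned. Thus for $\Psi\in\mathcal{M}^\prime_i$, no $E^\prime$ with $\Psi_i(E^\prime)=\Psi$ has a pruned prefix, and the paper's $\prob{\Psi}$ coincides with your restricted one. The lemma is only meaningful for $\Psi\in\mathcal{M}^\prime_i$ anyway, since for a pruned $\Psi$ the right-hand side is $0$.
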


It remains to be explained how to compute $\prob{S\reachto C\mid\Psi}$ for an SCC $C$ of $G_i(\Psi)$.
If there exists $u\in C$ such that $u\in W_{i+1}$, we can decompose it using the case analysis of $e_i$:
If $e_i$ is present, $\prob{S\reachto C\mid\Psi}$ equals $\prob{S\reachto C^\prime\mid\bddhi(\Psi)}$, where $C^\prime$ is the SCC of $G_{i+1}(\bddhi(\Psi))$ containing $u$.
If $e_i$ is absent, $\prob{S\reachto C\mid\Psi}$ equals $\prob{S\reachto C^{\prime\prime}\mid\bddlo(\Psi)}$, where $C^{\prime\prime}$ is the SCC of $G_{i+1}(\bddlo(\Psi))$ containing $u$.
Here, the relation between $C$ and $C^\prime,C^{\prime\prime}$ can be seen as the transition from $C$ to successive SCCs $C^\prime,C^{\prime\prime}$.
By considering the pruned cases, we formally define the successive SCCs as follows.
\begin{definition}
  \label{def:successive}
  For SCC $C$ of $G_i(\Psi)$, we define \emph{successive SCC} $\bddf(C)$ for $\bddf\in\{\bddlo,\bddhi\}$ as follows.
  (i) If $\bddf=\bddhi$, and $e_i=(u,v)$ with $u\in W_i^{S\reachto}(\Psi)\cup S$ and $v\in C$, $\bddhi(C)\coloneqq\top$, meaning that $C$ can be reached from $S$ by determining the presence of $e_i$.
  (ii) If (i) does not hold and $\bddf(\Psi)=\bot$, $\bddf(C)\coloneqq\bot$, meaning that $C$ cannot be reached from $S$.
  (iii) If neither (i) nor (ii) holds and there exists $u\in C$ such that $u\in W_{i+1}$, $\bddf(C)\coloneqq C_{\bddf(\Psi)}(u)$. 
  (iv) Otherwise, $\bddf(C)\coloneqq\emptyset$, meaning that there is no successive SCC.
\end{definition}
\begin{lemma}
  \label{lem:rdp1}
  If there exists $u\in C$ such that $u\in W_{i+1}$, i.e., successive SCCs exist, we can decompose $\prob{S\reachto C\mid \Psi}$ for $\Psi=\Psi_i(E^\prime)$ and SCC $C$ of $G_i(\Psi)$ as follows:
  \begin{align}
    \prob{S\reachto C\mid\Psi}=p_{e_i}\cdot\prob{S\reachto\bddhi(C)\mid \bddhi(\Psi)}+(1-p_{e_i})\cdot\prob{S\reachto\bddlo(C)\mid \bddlo(\Psi)}. \label{eq:rdp1}
  \end{align}
\end{lemma}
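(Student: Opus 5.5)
The plan is to reduce the claim to the single-edge decomposition (\ref{eq:singledecomp}), applied to the event $S\reachto C$. First, I make $\prob{S\reachto C\mid\Psi}$ precise. By Lemma~\ref{lem:equiv2}, for any $E'$ with $\Psi_i(E')=\Psi$ and any $x\in C$, the event $S\reachto x$ has the same conditional probability. So I define $\prob{S\reachto C\mid\Psi}$ as $\prob{S\reachto x\mid E',\overline{E_{<i}\setminus E'}}$ for any representative $E'$ and any $x\in C$. The values $\top$, $\bot$ and $\emptyset$ are read as probabilities $1$, $0$ and (unused) respectively. I then fix a representative $E'$ and the given $u\in C\cap W_{i+1}$, and write
$\prob{S\reachto u\mid E',\overline{E_{<i}\setminus E'}}=p_{e_i}\prob{S\reachto u\mid E'\cup\{e_i\},\overline{E_{<i}\setminus E'}}+(1-p_{e_i})\prob{S\reachto u\mid E',\overline{E_{<i+1}\setminus E'}}$, which is (\ref{eq:singledecomp}). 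It remains to identify each summand with the corresponding term on the right of (\ref{eq:rdp1}).

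For the $\bddhi$ term there are three cases, following Definition~\ref{def:successive}.

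\emph{Case (i):} $e_i=(a,b)$ with $a\in W_i^{S\reachto}(\Psi)\cup S$ and $b\in C$. Then $S\reachto a$ and $a\to b$ hold in $(V,E'\cup\{e_i\})$. Also $b\reachto u$ holds in $(V,E')$, because $b$ and $u$ lie in the same SCC of $G_i(\Psi)$, which is a subgraph of the transitive closure of $(V,E')$. Hence the conditional probability is $1$, matching $\top$.

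\emph{Case (ii):} $\bddhi(\Psi)=\bot$. The pruning rule guarantees that no vertex in $W_{i+1}\cup B_{i+1}$ is reachable from $S$ afterwards. Since $u\in W_{i+1}$, the probability is $0$.

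\emph{Case (iii):} $\bddhi(\Psi)=\Psi_{i+1}(E'\cup\{e_i\})$ by the defining property of the transition functions, and $u\notin W_{i+1}^{S\reachto}(\bddhi(\Psi))$ because case (i) fails. I need to verify this last point: $S\reachto u$ can newly arise using $e_i$ only through an edge from an $S$-reachable vertex into a vertex that reaches $u$. One must argue that such a head lies in $C$, or else handle the situation where $u$ becomes $S$-reachable through a vertex outside $C$. In that situation I expect either $C_{\bddhi(\Psi)}(u)$ to be undefined and the definition to need interpretation as $\top$, or the relevant probability to be $1$ anyway. Otherwise, $C_{\bddhi(\Psi)}(u)$ is an SCC of $G_{i+1}(\bddhi(\Psi))$ containing $u$, and by the definition of $\prob{S\reachto\cdot\mid\cdot}$ via representative $u$ the term equals $\prob{S\reachto\bddhi(C)\mid\bddhi(\Psi)}$.

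The $\bddlo$ term is handled the same way, using only cases (ii) and (iii), since case (i) applies only to $\bddhi$.

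The main obstacle is the bookkeeping of case (iii) for $\bddhi$: ensuring that the successive SCC is well defined, i.e.\ that $u\notin W_{i+1}^{S\reachto}$. If $e_i$ makes $u$ reachable from $S$ via a vertex $b\notin C$ with $b\reachto u$, then I would first try to show that this case is absorbed by treating such a $u$ as reached, giving probability $1$. I would then check that the choice of $u$ does not matter, since any two choices lie in the same new SCC because $E'\subseteq E'\cup\{e_i\}$ preserves strong connectivity among them.
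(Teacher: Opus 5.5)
Your proposal is correct and follows the paper's route. You apply (\ref{eq:singledecomp}) to $S\reachto u$ for a representative $u\in C\cap W_{i+1}$, which Lemma~\ref{lem:equiv2} justifies, and then match each summand with the successive SCC through the cases of Definition~\ref{def:successive}.

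The situation you flag in case (iii) really does occur, and your fix is exactly the paper's remark after the lemma. The head of $e_i$ can reach $C$ without lying in $C$, so $u$ becomes $S$-reachable and $C_{\bddhi(\Psi)}(u)$ is undefined. Reading that term as $1$, i.e.\ $\bddhi(C)=\top$, is the right resolution, whereas trying to show the head always lies in $C$ would fail.
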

Note that if $u\in W_{i+1}^{S\reachto}(\bddhi(\Psi))$, since $\prob{S\reachto u\mid\bddhi(\Psi)}=1$, we replace $\prob{S\reachto C_{\bddhi(\Psi)}(u)\mid \bddhi(\Psi)}$ in the above formula with $1$.
This replacement is reflected in case (i) of Definition~\ref{def:successive}.

What can be done when such a vertex $u$ does not exist?
In this case, we attempt to use the original TC.
We observe in Section~\ref{ssec:equiv} that if $u\in W_i$, we can recover $\Phi=\Phi_i^u(E^\prime)$ from $\Psi_i(E^\prime)$ with an appropriate transformation.
Thus, by fixing some $u\in C$ and letting $\Phi^u$ be a TC recovered from STC $\Psi$, we have the following case analysis:
If $e_i$ is present, $\prob{S\reachto C\mid\Psi}=\prob{S\reachto u\mid\Phi^u}$ equals $\prob{S\reachto u\mid\bddhi^u(\Phi^u)}$.
If $e_i$ is absent, $\prob{S\reachto C\mid\Psi}$ equals $\prob{S\reachto u\mid\bddlo^u(\Phi^u)}$.
This leads to the following decomposition.
\begin{lemma}
  \label{lem:rdp2}
  For $\Psi=\Psi_i(E^\prime)$ and SCC $C$ of $G_i(\Psi)$, suppose that successive SCCs do not exist.
  Let $u\in C$ and $\Phi^u$ be a TC recovered from STC $\Psi$. Then,
  \begin{align}
    \prob{S\reachto C\mid\Psi}=p_{e_i}\cdot\prob{S\reachto u\mid \bddhi^u(\Phi^u)}+(1-p_{e_i})\cdot\prob{S\reachto u\mid \bddlo^u(\Phi^u)}. \label{eq:rdp2}
  \end{align}
\end{lemma}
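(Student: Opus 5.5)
The plan is to reduce Lemma~\ref{lem:rdp2} to the single-vertex decomposition (\ref{eq:phidecomp}) of Section~\ref{sec:past}, via Lemma~\ref{lem:equiv2}. Fix $u\in C$. First we make precise that $\prob{S\reachto C\mid\Psi}$ means $\prob{S\reachto u\mid E^\prime,\overline{E_{<i}\setminus E^\prime}}$ for any $E^\prime\subseteq E_{<i}$ with $\Psi_i(E^\prime)=\Psi$. This is well defined: by Lemma~\ref{lem:equiv2}, for any two such $E^\prime,F^\prime$ and any $u,w\in C$, the events $S\reachto_{(V,E^\prime\cup H)}u$ and $S\reachto_{(V,F^\prime\cup H)}w$ coincide for every $H\subseteq E_{\geq i}$. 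Since the conditional probability is obtained by summing $\prob{H}$ over exactly those $H$, the value depends only on $\Psi$ and $C$. In particular, the choice of $u\in C$ is irrelevant.

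Next we recover the TC. As observed for middle levels, $u\in W_i$, and $\Phi^u\coloneqq\Phi_i^u(E^\prime)$ is obtained from $\Psi$ by deleting the columns indexed by $W_i^{\reachto u}(E^\prime)$ and relabelling. This set is read off from $\Psi$ as $\{x\in W_i\mid \Psi_{xu}=1\}\cup\{u\}$, where rows absent from $\Psi$ (vertices in $W_i^{S\reachto}$) are treated via the row $S$. So $\Phi^u$ is a function of $\Psi$ alone and equals $\Phi_i^u(E^\prime)$ for every representative $E^\prime$. Hence $\prob{S\reachto C\mid\Psi}=\prob{S\reachto u\mid\Phi^u}$.

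We then apply (\ref{eq:phidecomp}) with $v=u$:
\[
\prob{S\reachto u\mid\Phi^u}=p_{e_i}\cdot\prob{S\reachto u\mid\bddhi^u(\Phi^u)}+(1-p_{e_i})\cdot\prob{S\reachto u\mid\bddlo^u(\Phi^u)}.
\]
Combined with the previous identification, this is exactly (\ref{eq:rdp2}). Underneath, this rests on (\ref{eq:singledecomp}) applied to $e_i$, together with the transition properties $\bddhi^u(\Phi_i^u(E^\prime))=\Phi_{i+1}^u(E^\prime\cup\{e_i\})$ and $\bddlo^u(\Phi_i^u(E^\prime))=\Phi_{i+1}^u(E^\prime)$.

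The hypothesis that no successive SCC exists (no vertex of $C$ lies in $W_{i+1}$) is not needed for the identity itself. Its role is to explain why we must drop down to TCs: $u\in A_{i+1}$, so $u$ is no longer a frontier vertex, and the STC $\bddf(\Psi)$ cannot track it. The right-hand sides are then handled by the lower-level diagrams, which are legitimate by Lemma~\ref{lem:equiv1}.

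The one point needing care, and the main obstacle, is the pruned cases $\bddf^u(\Phi^u)\in\{\top,\bot\}$. We check that the conventions $\prob{S\reachto u\mid\top}=1$ and $\prob{S\reachto u\mid\bot}=0$ agree with the true conditional probabilities. This follows from the soundness of $\top$/$\bot$-pruning stated in Section~\ref{sec:past}. Its proof uses that once $i\geq i_u$ (true here, since $u\in A_{i+1}$), no edge of $E_{\geq i+1}$ can create a new path into $u$.
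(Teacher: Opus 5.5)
Your proof is correct and takes essentially the paper's route: you identify $\prob{S\reachto C\mid\Psi}$ with $\prob{S\reachto u\mid\Phi^u}$ by recovering the TC from the STC (made well defined via Lemma~\ref{lem:equiv2}), then apply the one-edge case analysis (\ref{eq:phidecomp}), and your treatment of well-definedness and of the pruned cases adds rigor the paper leaves implicit. One small imprecision: $\bot$-pruning for $u$ needs more than $i\ge i_u$, namely also that no row index ($S$ or a vertex of $W_{i+1}$) reaches $u$ (or else the $S$-side condition), so citing the pruning rules as stated in Section~\ref{sec:past} is the right justification.
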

To compute the terms in the right-hand-side of (\ref{eq:rdp2}), we can use the previous algorithm in Section~\ref{sec:past}.
However, naively computing these terms for every $u\in V$ requires $\order{mn\omega_p^2\cdot 2^{\omega_p^2}}$ time, since this amounts to constructing a diagram of the previous algorithm for every $u$.

To overcome this, we use Lemma~\ref{lem:equiv1}.
Since we only need to compute $\prob{S\reachto v\mid\Phi^v}$ for the case $v\in A_i$, we can share identical TCs, as in Observation~\ref{obs:bottom}.
Accordingly, we do not need to construct different diagrams for every $v$ to compute $\prob{S\reachto v\mid \Phi^v}$, that is, if identical TCs $\Phi^v$ and $\Phi^w$ from different $v,w$ appear, we share them.
This enables us to compute all $\prob{S\reachto v\mid\Phi^v}$s in linear time.

\subsection{Procedure}
\label{ssec:procedures}
We formally describe the proposed method in Algorithm~\ref{alg:proposed}.
Lines 1--6 construct the diagram of STCs $\Psi$.
$\Psi.\bddf$ stores a pointer to $\bddf(\Psi)$.
We consider the successive SCCs in lines~7--11.
If a successive SCC of $C$ exists, $C.\bddf$ stores the pointer to it.
Otherwise, following the approach of Section~\ref{ssec:sketch}, we recover the original TC $\Phi^u$, where $u\in C$, and then carry out transition $\bddlo^u$ or $\bddhi^u$ on it.
$\trans^u(\Psi)$ recovers the original TC by removing every column $x$ satisfying $\Psi_{xu}=1$ from $\Psi$, that is, we remove columns whose indices can reach $u$.
If a pruning occurs by considering $\Phi=\bddf^u(\trans^u(\Psi))$, $C.\bddf$ stores $\bot$ or $\top$.
Otherwise, $C.\bddf$ stores a pointer to $\modifymap^{t}(\Phi)$, where $\modifymap^{t}(\cdot)$ changes the column index $u$ to $t$.
Here, $t$ is an imaginary vertex that is assumed to be contained within $A_i$.
Since Observation~\ref{obs:bottom} confirms that the transitions of identical TCs are identical, we standardize the column index $u$ by changing it to $t$.
We build a diagram of $\Phi_i^{t}(\cdot)$ in Lines 12--15.

Lines 17--19 compute $\Pmap[\Psi]=\prob{\Psi}$ for every $\Psi\in\mathcal{M}_i^\prime$ in the same manner as Algorithm~\ref{alg:previous}, and lines~20--22 compute $\Qmap[\Phi]=\prob{S\reachto t\mid\Phi}$ for every $\Phi\in\mathcal{N}_i^\prime$ in a bottom-up manner.
Lines~23--26 compute $\Rmap[\Psi,C]=\prob{S\reachto C\mid\Psi}$ for every $\Psi$ and every SCC $C$ of $G_i(\Psi)$.
Here, $\langle c \text{ ? } x \text{ : } y\rangle$ stands for a conditional operator that returns $x$ if $c$ is satisfied or $y$ otherwise.
When $C.\bddf$ is either an original TC $\Phi^t$ or $\bot,\top$, we use the $\Qmap$ value; otherwise, we use the $\Rmap$ value.
Finally, lines~27--31 compute every $\Resmap[v]=\prob{S\reachto v}$ using (\ref{eq:expansion4}).

\begin{algorithm}[!tb]
{\footnotesize
  $\mathcal{M}_1^\prime\leftarrow\{\Psi_1(\emptyset)\}$, $\mathcal{M}_i^\prime\leftarrow\{\}\ (i\geq 2)$, $\mathcal{N}_i^\prime\leftarrow\{\}\ (i\geq 1)$\;
  \For(\tcp*[f]{Diagram construction for $\Psi$}){$i \leftarrow 1\ \KwTo\ m$}{
    \ForEach{$\Psi\in\mathcal{M}_i^\prime$}{
      \ForEach{$\bddf\in\{\bddlo,\bddhi\}$}{
        \lIf{$\bddf(\Psi)=\bot$}{$\Psi.\bddf\leftarrow\bot$}
        \lElse{ $\mathcal{M}_{i+1}^\prime\leftarrow\mathcal{M}_{i+1}^\prime\cup\{\bddf(\Psi)\}$, $\Psi.\bddf\leftarrow\bddf(\Psi)\in\mathcal{M}_{i+1}^\prime$}
        \ForEach(\tcp*[f]{Compute successive SCCs}){$C$: SCC of $G_i(\Psi)$}{
          \lIf(\tcp*[f]{Including the cases $\bddf(C)=\bot,\top$}){$\bddf(C)\neq\emptyset$}{$C.\bddf\leftarrow\bddf(C)$}
          \Else(\tcp*[f]{Choose $u\in C$ arbitrarily}){
            \lIf{$\Phi\leftarrow\bddf^u(\trans^u(\Psi))=\bot$ or $\top$}{$C.\bddf\leftarrow \bot$ or $\top$}
            \lElse{ $\mathcal{N}_{i+1}^\prime\leftarrow\mathcal{N}_{i+1}^\prime\cup\{\modifymap^{t}(\Phi)\}$, $C.\bddf\leftarrow\modifymap^{t}(\Phi)\in\mathcal{N}_{i+1}^\prime$}
          }
        }
      }
    }
    \ForEach(\tcp*[f]{Diagram construction for $\Phi^t$}){$\Phi\in\mathcal{N}_i^\prime$}{
      \ForEach{$\bddf\in\{\bddlo,\bddhi\}$}{
        \lIf{$\bddf^{t}(\Phi)=\bot$ or $\top$}{$\Phi.\bddf\leftarrow\bot$ or $\top$}
        \lElse{ $\mathcal{N}_{i+1}^\prime\leftarrow\mathcal{N}_{i+1}^\prime\cup\{\bddf^{t}(\Phi)\}$, $\Phi.\bddf\leftarrow\bddf^{t}(\Phi)\in\mathcal{N}_{i+1}^\prime$}
      }
    }
  }
  $\Pmap[\Psi]\leftarrow1$ for $\Psi\in\mathcal{M}_1^\prime$, $\Pmap[\Psi]=0$ for $\Psi\in\mathcal{M}_i^\prime$ $(i\geq 2)$, $\Qmap[\bot]\leftarrow 0$, $\Qmap[\top]\leftarrow 1$\;
  \For(\tcp*[f]{Top-down DP}){$i \leftarrow 1\ \KwTo\ m$}{
    \ForEach(\tcp*[f]{$\Pmap[\Psi]$ stores $\prob{\Psi}$}){$\Psi\in\mathcal{M}_i^\prime$}{
      $\Pmap[\Psi.\bddlo]\pluseq (1-p_{e_i})\cdot\Pmap[\Psi]$, $\Pmap[\Psi.\bddhi]\pluseq p_{e_i}\cdot\Pmap[\Psi]$\;
    }
  }
  \For(\tcp*[f]{Bottom-up DP}){$i \leftarrow m\ \KwTo\ 1$}{
    \ForEach(\tcp*[f]{$\Qmap[\Phi]$ stores $\prob{S\reachto t\mid\Phi}$}){$\Phi\in\mathcal{N}_i^\prime$}{
      $\Qmap[\Phi]\leftarrow (1-p_{e_i})\cdot\Qmap[\Phi.\bddlo]+p_{e_i}\cdot\Qmap[\Phi.\bddhi]$\;
    }
    \ForEach{$\Psi\in\mathcal{M}_i^\prime$}{
      \ForEach(\tcp*[f]{$\Rmap[\Psi,C]$ stores $\prob{S\reachto C\mid\Psi}$}){$C$: SCC of $G_i(\Psi)$}{
        $\Rmap[\Psi,C]\leftarrow (1-p_{e_i})\cdot\langle\text{$\bddlo(C)\in\{\top,\bot,\emptyset\}$ ? }\Qmap[C.\bddlo]\text{ : }\Rmap[\Psi.\bddlo,C.\bddlo]\rangle$\\
        $\qquad\qquad\qquad+p_{e_i}\cdot\langle\text{$\bddhi(C)\in\{\top,\bot,\emptyset\}$ ? }\Qmap[C.\bddhi]\text{ : }\Rmap[\Psi.\bddhi,C.\bddhi]\rangle$\;
      }
    }
  }
  \ForEach(\tcp*[f]{Compute $\prob{S\reachto v}$ by using Eq. (\ref{eq:expansion4})}){$v\in V\setminus S$}{
    $\Resmap[v]\leftarrow 0$ \tcp*{$\Resmap[v]$ stores $\prob{S\reachto v}$}
    \ForEach(\tcp*[f]{Choose $i^\star$ arbitrarily s.t. $v\in W_{i^\star}$}){$\Psi\in\mathcal{M}_{i^\star}^\prime$}{
      $\Resmap[v]\pluseq\Pmap[\Psi]\cdot\langle\text{$v\in W_{i^\star}^{S\reachto}(\Psi)$ ? }1 \text{ : }\Rmap[\Psi,C_\Psi(v)]\rangle$
    }
  }
  \Return $\Resmap$
  \caption{Proposed algorithm for computing $\prob{S\reachto v}$ for every $v$}
  \label{alg:proposed}
}
\end{algorithm}

Here, we confirm the correctness of Algorithm~\ref{alg:proposed}.
\begin{lemma}
  \label{lem:correctness}
  After executing Algorithm~\ref{alg:proposed}, $\Resmap[v]$ equals $\prob{S\reachto v}$.
\end{lemma}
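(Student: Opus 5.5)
We verify the three DP arrays in order, each against the decomposition that justifies it, and then plug them into (\ref{eq:expansion4}).

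\textbf{The array $\Pmap$.} Lines 17--19 implement the recursion of Lemma~\ref{lem:pdp}. By induction on $i$ (base $\Pmap[\Psi_1(\emptyset)]=1=\prob{\Psi_1(\emptyset)}$), $\Pmap[\Psi]=\prob{\Psi}$ for every $\Psi\in\mathcal{M}_i^\prime$. Pruned STCs are dropped, but they contribute only to terms whose conditional probability is $0$ for every $w\in W_{i+1}\cup B_{i+1}$. Hence restricting (\ref{eq:expansion3}) and (\ref{eq:expansion4}) to $\mathcal{M}_i^\prime$ instead of $\mathcal{M}_i$ loses nothing for $v\in W_i$.

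\textbf{The array $\Qmap$.} Every $\Phi\in\mathcal{N}_i^\prime$ has the form $\modifymap^t(\Phi_i^u(E^\prime))$ for some $u\in A_i$ and $E^\prime\subseteq E_{<i}$. This holds for $u$ because the predecessor SCC $C\ni u$ had no member in $W_{i+1}$, and $u$ appears in $E_{<i+1}$, so $u\in A_{i+1}$; the property then persists downward since the sets $A_j$ only grow. We claim $\Qmap[\Phi]=\prob{S\reachto u\mid\Phi_i^u(E^\prime)}$, and that this value is independent of the representative $(u,E^\prime)$. Independence follows from Lemma~\ref{lem:equiv1}: any two representatives give identical TCs, so they have the same conditional probability. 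This is exactly what justifies merging them under the common label $t$, and also justifies applying the transition $\bddf^t$ as if it were $\bddf^u$. The value itself follows by downward induction on $i$, using (\ref{eq:phidecomp}) together with the correctness of $\top$- and $\bot$-pruning established in Section~\ref{sec:past}. Base cases are $\Qmap[\top]=1$ and $\Qmap[\bot]=0$.

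\textbf{The array $\Rmap$.} We show $\Rmap[\Psi,C]=\prob{S\reachto C\mid\Psi}$ by downward induction on $i$. Well-definedness of the left-hand side comes from Lemma~\ref{lem:equiv2}: it does not depend on the choice of $E^\prime$ with $\Psi_i(E^\prime)=\Psi$, nor on the vertex of $C$. The proof splits on the cases of Definition~\ref{def:successive}.
\begin{itemize}
\item If a successive SCC exists (case (iii)), Lemma~\ref{lem:rdp1} gives the recursion, and the induction hypothesis applies to $\Rmap[\Psi.\bddf,C.\bddf]$.
\item In case (i), the presence of $e_i$ connects $S$ to $C$, so the value is $1$.
\item In case (ii), $\bot$-pruning of $\bddf(\Psi)$ makes the value $0$. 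This uses that no vertex of $C$ is reachable from $S$ under $\bddf(\Psi)$ and that no further reach is possible because $i\ge i_S$.
\item In case (iv), Lemma~\ref{lem:rdp2} applies with the recovered TC $\trans^u(\Psi)=\Phi_i^u(E^\prime)$. The successor is either pruned to $\top/\bot$, or is $\modifymap^t(\bddf^u(\Phi_i^u(E^\prime)))$, whose $\Qmap$ value is correct by the previous step.
\end{itemize}
Care is needed that lines 23--26 use the $\Qmap$ value exactly in the $\top/\bot/\emptyset$ cases. Line 8 stores $\bot/\top$ for cases (i) and (ii), and $\Qmap[\top]$, $\Qmap[\bot]$ hold the right constants. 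I read the $\emptyset$ label in line 24 as ``case (iv), pointer into $\mathcal{N}_{i+1}^\prime$.''

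\textbf{Conclusion and main obstacle.} For $v\in V\setminus S$ with some $i^\star$ such that $v\in W_{i^\star}$, lines 27--31 compute (\ref{eq:expansion4}) with correct $\Pmap$ and $\Rmap$ values, so $\Resmap[v]=\prob{S\reachto v}$. This uses that $\Phi_{i^\star}^v$ is recoverable from $\Psi_{i^\star}$ and that $C_\Psi(v)$ is defined whenever $v\notin W_{i^\star}^{S\reachto}(\Psi)$. Degree-one vertices are handled by the separate rule the paper describes, which I would verify directly: such a $v$ has a single edge $(x,v)$, so $\prob{S\reachto v}=p\cdot\prob{S\reachto x}$. I expect the main obstacle to be the $\Qmap$ step: proving that the merged diagram of $\Phi^t$ is consistent. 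This requires showing that every inserted TC really comes from a vertex in $A_{i+1}$, so that Lemma~\ref{lem:equiv1} applies, and that the pruning conditions (which mention $i_v$) are invariant under relabeling to $t$. For the latter I would note that $v\in A_{i+1}$ forces $i\ge i_v$ at all later levels.
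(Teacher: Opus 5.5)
Your proposal is correct and follows essentially the same route as the paper's proof. It verifies $\Pmap$ via Lemma~\ref{lem:pdp}, $\Qmap$ via (\ref{eq:qdp}), and $\Rmap$ via Lemmas~\ref{lem:rdp1} and~\ref{lem:rdp2}, using Lemma~\ref{lem:equiv1} to turn $\prob{S\reachto u\mid\bddf^u(\Phi^u)}$ into $\Qmap[\modifymap^t(\bddf^u(\Phi^u))]$, and then assembles via (\ref{eq:expansion4}), with your extra checks ($u\in A_{i+1}$, well-definedness via Lemmas~\ref{lem:equiv1} and~\ref{lem:equiv2}, pruning invariant under relabeling to $t$) making explicit what the paper leaves implicit; in your side remark on degree-one vertices, note that $v$ may instead be the tail of its unique edge, in which case $\prob{S\reachto v}=0$.
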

\begin{proof}
First, we confirm $\Pmap[\Psi]=\prob{\Psi}$.
Starting from $\Pmap[\Psi_1(\emptyset)]=\prob{\Psi_1(\emptyset)}=1$ for $\Psi_1(\emptyset)\in\mathcal{M}^\prime_1$, lines 17--19 compute every $\prob{\Psi}$ value for $\mathcal{M}^\prime_2,\ldots,\mathcal{M}^\prime_m$ according to (\ref{eq:pdp}).

Second, we confirm $\Qmap[\Phi]=\prob{S\reachto t\mid\Phi}$, but this is much simpler;
with the case analysis of whether $e_i$ is present or absent, we have
\begin{align}
  \prob{S\reachto t\mid \Phi}=p_i\cdot\prob{S\reachto t\mid\bddhi^t(\Phi)}+(1-p_i)\cdot\prob{S\reachto t\mid\bddlo^t(\Phi)}. \label{eq:qdp}
\end{align}
Starting from $\Qmap[\bot]=\prob{S\reachto t\mid\bot}=0$ and $\Qmap[\top]=\prob{S\reachto t\mid\top}=1$, lines 20--22 compute every $\prob{S\reachto t\mid\Phi}$ value for $\mathcal{N}^\prime_m,\mathcal{N}^\prime_{m-1},\ldots$ according to (\ref{eq:qdp}).

Third, we confirm $\Rmap[\Psi,C]=\prob{S\reachto C\mid\Psi}$.
If there exists $u\in C$ such that $u\in W_{i+1}$, i.e., a successive SCC exists, we can apply (\ref{eq:rdp1}).
Otherwise (i.e., a successive SCC does not exist), we can apply (\ref{eq:rdp2}).
Here, without considering the pruning, $C.\bddf$ stores a pointer to $\bddf(C)$ if a successive SCC exists or, otherwise, to $\modifymap^{t}(\bddlo^u(\Phi^u))$.
Consequently, lines 23--26 correctly compute $\Rmap[\Psi,C]$ according to (\ref{eq:rdp1}) if a successive SCC exists or to (\ref{eq:rdp2}) otherwise.
For the latter case, we use Lemma~\ref{lem:equiv1}; i.e., $\prob{S\reachto u\mid \bddlo^u(\Phi^u)}=\prob{S\reachto t\mid\modifymap^t(\bddlo^u(\Phi^u))}=\Qmap[\modifymap^t(\bddlo^u(\Phi^u))]$.

Finally, by choosing $i^\star$ such that $v\in W_{i^\star}$, $\Resmap[v]=\prob{S\reachto v}$ is correctly computed using (\ref{eq:expansion4}) through lines 27--30.
\end{proof}

\subparagraph*{Treatment of degree $1$ vertices.}
We finally describe how to deal with degree $1$ vertices.
By obtaining the $\prob{S\reachto v}$ values for all vertices with degree not less than $2$, we can easily obtain the $\prob{S\reachto v}$ values for degree $1$ vertices.
Let $v\in V\setminus S$ be a vertex whose degree is $1$.
Let $e_v$ be the unique edge incident to $v$.
If $e_v=(v,w)$ for some $w\in V$, i.e., $v$ is the tail of $e_v$, we have $\prob{S\reachto v}=0$ because $v$ has no incoming edges.
Otherwise (i.e., $v$ is the head of $e_v$), we have $\prob{S\reachto v}=p_{e_v}\cdot\prob{S\reachto w}$ because every path from some $s\in S$ to $v$ must go through $w$, thus $e_v$.
Since $G$ is connected and thus $w$ has degree at least $2$ except for the trivial case where $e_v$ is the only edge of $G$, we can obtain $\prob{S\reachto v}$ for $v$ whose degree is $1$ from the value of $\prob{S\reachto w}$.

\subsection{Complexity}
\label{ssec:complexity}
To analyze the complexity, we again use $\omega=\max_i|W_i|$.
First, we consider the diagram construction.
The key lemma is as follows.

\begin{restatable}{lemma}{LemTransition}
  \label{lem:transition}
  Given $\Phi=\Phi_i^v(E^\prime)$ for some $E^\prime\subseteq E_{<i}$, we can compute the transitions $\bddlo^v(\Phi)$ and $\bddhi^v(\Phi)$ in $\order{(|W_i|+|W_{i+1}|)^2}$ time.
  Similarly, given $\Psi=\Psi_i(E^\prime)$ for some $E^\prime\subseteq E_{<i}$, we can also compute $\bddlo(\Psi)$ and $\bddhi(\Psi)$ in $\order{(|W_i|+|W_{i+1}|)^2}$ time.
\end{restatable}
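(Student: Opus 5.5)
We plan to compute the transition explicitly by simulating the addition (or omission) of the single edge $e_i=(a,b)$ on the reachability relation stored in $\Phi$, then removing vertices that leave the frontier and adding those that enter it. The key observation is that $\Phi_i^v(E')$ records exactly the reachability among $W_i\cup\{S,v\}$ in $(V,E')$, and every path in $(V,E')$ between such vertices that passes through $A_i$ can be summarized by entries of $\Phi$. So all information needed for $W_{i+1}\cup\{S,v\}$ is available locally. We also need the convention that rows $u\in W_i^{S\reachto}(E')$ and columns $x\in W_i^{\reachto v}(E')$ are absent. These are implied by $\Phi_{Su}=1$ and $\Phi_{xv}=1$, so we first expand $\Phi$ into a full $(|W_i|+2)\times(|W_i|+2)$ reachability matrix $R$ over $W_i\cup\{S,v\}$. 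Deleted rows are filled via $R_{ux}=1$ iff $\Phi_{Sx}=1$ is not what we need; we instead record the flags ``reached from $S$'' and ``reaches $v$'' as two extra vectors. This costs $\order{|W_i|^2}$.

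\textbf{Transition $\bddlo$.} The edge set is unchanged, so reachability among the old vertices is unchanged. New frontier vertices $W_{i+1}\setminus W_i$ are endpoints of $e_i$ that were in $B_i$; they are isolated in $E'$, so their rows and columns are zero except the diagonal. We then delete the vertices of $W_i\setminus W_{i+1}$, which simply means dropping their rows and columns. Finally we recompute the index sets from the $S$-row and the $v$-column. The total cost is $\order{(|W_i|+|W_{i+1}|)^2}$.

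\textbf{Transition $\bddhi$.} We first add the new frontier endpoints as isolated vertices, as in the $\bddlo$ case. Adding the arc $(a,b)$ then updates reachability by the rank-one rule
\[
R'_{ux}=R_{ux}\lor (R_{ua}\land R_{bx}),
\]
which is correct because any new path uses $(a,b)$ at most once in a shortest walk. This update is $\order{k^2}$ for $k=|W_i|+|W_{i+1}|+2$. Note that $a,b\in W_i\cup W_{i+1}$ holds automatically. The special index $S$ is handled as a row with $R_{Sa}$, and the column $v$ uses $R_{bv}$. After the update we drop $W_i\setminus W_{i+1}$, and then drop rows newly reached from $S$ and columns newly reaching $v$. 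The pruning checks, such as $\Phi_{Sv}=1$, $i\ge i_S$ with an all-zero $S$-row, or $i\ge i_v$ with an all-zero $v$-column, are scans of one row or column, taking $\order{k}$ time.

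\textbf{The $\Psi$ case.} The same procedure applies with the $v$-column removed and $W_i$ as the column set. The only potential obstacle is the deleted rows: for $u\in W_i^{S\reachto}$ we do not know $R_{ux}$ in general. This loss is harmless for the transition. A row $u$ only matters through $R_{ua}$ feeding new reachability. If $u$ is reachable from $S$, everything $u$ reaches is also reached from $S$, and the $S$-row already captures that through $R_{Sa}\land R_{bx}$. Moreover, rows of $S$-reachable vertices are never needed again since they stay reachable. So we justify that the rank-one update restricted to the present rows is exact, and the stated bound follows.
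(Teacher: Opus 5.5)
Your route is the paper's $G^\star$ construction made explicit. The rank-one update $R'_{ux}=R_{ux}\lor(R_{ua}\land R_{bx})$ is in fact a sounder justification of the quadratic bound than the paper's appeal to recomputing the transitive closure of $G^\star$: a closure from scratch on $k$ vertices is not known to be $O(k^2)$, but one arc insertion into an already closed relation is. However, you drop a step that the paper's proof contains. A vertex $w\in(W_{i+1}\setminus W_i)\cap S$ is reached from $S$ by the empty path, so the new configuration needs $R_{Sw}=1$ and must not keep $w$ as a row; the paper does this by adding the arc $(S,w)$ to $G^\star$. You instead make every new vertex isolated, with a zero row and column. Since $W_1=\emptyset$ and every seed starts in $B_1$, your $S$-row can then never acquire its first $1$ (the term $R_{Sa}\land R_{bx}$ needs one already). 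So your configurations are wrong from the moment a seed enters the frontier. Likewise, your remark that $a,b\in W_i\cup W_{i+1}$ holds automatically is false when an endpoint of $e_i=(a,b)$ has degree one, since it then lies in $B_i\cap A_{i+1}$. If that endpoint is a seed $a$, then $\bddhi$ must still put $b$ into the set reached from $S$.

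Second, your argument that the missing rows are harmless is wrong as stated. The update reads not only $R_{ua}$ for the row being updated but also the row of the head $b$ through $R_{bx}$. That row is absent exactly when $b\in W_i^{S\reachto}(E^\prime)$. In the $\Phi$ case the update also reads column $a$, which is absent when $a\in W_i^{\reachto v}(E^\prime)$. Reading absent entries as $0$ still yields three things correctly: the $S$-row, the $v$-column, and the entries between indices that are neither reached from $S$ nor reach $v$. This is because $S\reachto b\reachto x$ means $x$ was already reached from $S$, and $S\reachto a\reachto v$ would have been $\top$-pruned. The remaining entries, however, are not determined by the configuration at all. Take $S=\{s\}$, $W_i=\{b,x\}$, $E^\prime=\{(s,b),(b,x)\}$ and $F^\prime=\{(s,b),(s,x)\}$; then $\Psi_i(E^\prime)=\Psi_i(F^\prime)$. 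Yet after adding $e_i=(a,b)$ with $a,x\in W_{i+1}$, the entry in row $a$, column $x$ is $1$ for $E^\prime$ and $0$ for $F^\prime$. So the exactness claim must be replaced by exactness on the relevant entries. You then need the observation that the other entries (columns already reached from $S$, and in the $\Phi$ case rows already reaching $v$) never affect $S\reachto v$, the pruning tests, or $G_{i+1}(\cdot)$, so they can be fixed canonically, e.g.\ to $0$. The paper's proof glosses over the same point, since a deleted row has no out-arcs in $G^\star$. But your write-up asserts exactness with an incorrect reason instead of this weaker, true statement.
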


Before proceeding to the full proof, we describe the ideas for computing transitions.
For TC $\Phi$, we build a graph $G^\star=(W_i\cup\{S,v\},E^\star)$, where $S$ is regarded as an individual special vertex and $E^\star=\{(u,w)\mid \Phi_{uw}=1\}$.
By adding and removing edges and vertices according to the transition and computing the transitive closure, we can obtain the updated graph, whose adjacency matrix constitutes $\bddhi(\Phi)$ or $\bddlo(\Phi)$.
The transition of STC $\Psi$ can be computed in very similar way.

\begin{proof}[Proof of Lemma~\ref{lem:transition}]
  We show the procedure for computing the transition of $\Phi$ as follows.
  First, we build a graph $G^\star=(W_i\cup\{S,v\},E^\star)$, where $S$ is regarded as an individual special vertex and $E^\star=\{(u,w)\mid \Phi_{uw}=1\}$.
  From the definition of $\Phi$, $u\reachto_{(G,E^\prime)}w$ if and only if $u\reachto_{G^\star}w$ for any $u\in W_i\setminus W_i^{S\reachto}(E^\prime)\cup\{S\}$ and $w\in W_i\setminus W_i^{\reachto v}(E^\prime)\cup\{v\}$.
  Second, we add vertices in $W_{i+1}\setminus W_i$ to $G^\star$ as isolated vertices.
  If an added vertex $w$ is in $S$, we also add an edge $(S,w)$ to $G^\star$.\footnote{Although $w\reachto S$, we do not need to add $(w,S)$. This is because we are not concerned with the reachability to $S$. In addition, adding an edge $(w,S)$ wrongfully implies the reachability from $w$ to $u$, where $u$ is the vertex that can be reached from $S$.}
  Third, if we want to compute $\bddhi^v(\Phi)$, we add an edge $e_i=(u,w)$ to $G^\star$.
  At this time, we have $u\reachto_{(G,E^\prime)}w \iff u\reachto_{G^\star}w$ for any $u\in W_{i+1}\setminus W_{i+1}^{S\reachto}(E^\prime)\cup\{S\}$ and $w\in W_{i+1}\setminus W_{i+1}^{\reachto v}(E^\prime)\cup\{v\}$ when we compute $\bddlo^v(\Phi)$, or $u\reachto_{(G,E^\prime\cup\{e_i\})}w \iff u\reachto_{G^\star}w$ for any $u\in W_{i+1}\setminus W_{i+1}^{S\reachto}(E^\prime\cup\{e_i\})\cup\{S\}$ and $w\in W_{i+1}\setminus W_{i+1}^{\reachto v}(E^\prime\cup\{e_i\})\cup\{v\}$ when we compute $\bddhi^v(\Phi)$.
  Consequently, the desired TC can be computed from $G^\star$: We compute the transitive closure of $G^\star$ and then compute $W_{i+1}^{S\reachto}\coloneqq\{w\in W_{i+1}\mid S\reachto_{G^\star}w\}$ and $W_{i+1}^{\reachto v}\coloneqq\{w\in W_{i+1}\mid w\reachto_{G^\star}v\}$.
  The binary matrix $\Phi^\prime$ indexed by $(W_{i+1}\setminus W_{i+1}^{S\reachto}\cup\{S\})\times (W_{i+1}\setminus W_{i+1}^{\reachto v}\cup\{v\})$, the entries of which satisfy $\Phi^\prime_{uw}=1$ if and only if $u\reachto_{G^\star}w$, is exactly the desired TC $\bddlo^v(\Phi)$ or $\bddhi^v(\Phi)$.

  All procedures described above can be executed in quadratic time in the size of $G^\star$.
  Since the number of vertices in $G^\star$ is at most $\order{|W_i|+|W_{i+1}|}$, the overall cost can be bounded by $\order{(|W_i|+|W_{i+1}|)^2}$.

  For STC $\Psi$, the computation of transitions proceeds in almost the same manner as above, except that $G^\star$ is first prepared as $(W_i\cup\{S\},E^\star)$, where $E^\star=\{(u,w)\mid \Psi_{uw}=1\}$, and we do not need to compute $W_{i+1}^{\reachto v}$.
  The complexity remains the same $\order{(|W_i|+|W_{i+1}|)^2}$ time.

  Note that the SCCs and their correspondence to successive SCCs (Definition~\ref{def:successive}) can also be computed within this process.
  By computing the SCCs of $G_i(\Psi)$ and $G_{i+1}^\prime(\bddf(\Psi))$, the correspondence of Definition~\ref{def:successive} can be easily computed.
  Computing SCCs requires $\order{|W_i|^2+|W_{i+1}|^2}$ time, which is absorbed in the complexity.
\end{proof}

Similar to $\Phi_i^v$, we can compute the transition $\bddlo,\bddhi$ of $\Psi$ in $\order{\omega^2}$ by recomputing the transitive closure.
Note that we can also compute the SCCs and their successive SCCs within this process, as described in the proof of Lemma~\ref{lem:transition}.
Thus, the total time for diagram construction is bounded by $\order{\omega^2}\cdot\order{\sum_{i}(|\mathcal{M}_i|+|\mathcal{N}_i|)}$ time.
Here, we can bound the number of possible patterns on $\Phi_i^v$ and $\Psi_i$, i.e., $|\mathcal{N}_i|$ and $|\mathcal{M}_i|$, as follows.
\begin{restatable}{lemma}{LemPatternPhi}
  \label{lem:pattern_phi}
  $|\mathcal{M}_i|=|\{\Phi_i^v(E^\prime)\mid E^\prime\subseteq E_{<i}\}|$ is bounded by $\order{2^{|W_i|^2}}$.
\end{restatable}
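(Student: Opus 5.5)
The plan is a direct counting argument: describe every TC $\Phi_i^v(E^\prime)$ by a small amount of data and count how many choices of that data exist. (The statement writes $\mathcal{M}_i$ but defines the set of $\Phi_i^v$'s. The same argument covers $\Psi_i$, since an STC is a TC with the column $v$ deleted.)

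A TC is determined by three pieces of data.
\begin{itemize}
\item The row index set. This is fixed once $W_i^{S\reachto}(E^\prime)\subseteq W_i$ is chosen, giving at most $2^{|W_i|}$ options.
\item The column index set. This is fixed once $W_i^{\reachto v}(E^\prime)\subseteq W_i$ is chosen, again at most $2^{|W_i|}$ options.
\item The $0/1$ entries of the matrix.
\end{itemize}

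The naive entry count would give $2^{(|W_i|+1)^2}$, which is $2^{|W_i|^2+2|W_i|+1}$. That is not $\order{2^{|W_i|^2}}$, so the main step is to show that many entries are forced.

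\textbf{Forced entries.} Several families of entries are fixed by reachability alone.
\begin{itemize}
\item The diagonal entries $\Phi_{uu}$ for $u\in W_i$ are always $1$, by reflexivity of $\reachto$.
\item The entry $\Phi_{Sv}$ must be $0$ for any non-pruned TC, since otherwise $\top$-pruning would apply. Alternatively, it costs only a factor $2$.
\item The row $S$ has entries only in columns outside $W_i^{S\reachto}$. Since the column set already excludes vertices that reach $v$, the entry $\Phi_{Sw}$ for $w\in W_i$ equals $1$ exactly when $w\in W_i^{S\reachto}$. So the row $S$ is determined by the choice of $W_i^{S\reachto}$ and is, in fact, all zeros on the columns that remain.
\item Symmetrically, the column $v$ is determined by $W_i^{\reachto v}$ on the surviving rows, and is all zeros there.
\end{itemize}

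\textbf{Counting the free part.} After the forced entries, the free part is the off-diagonal block on $(W_i\setminus W_i^{S\reachto})\times(W_i\setminus W_i^{\reachto v})$. Write $k=|W_i|$, $a=|W_i^{S\reachto}|$ and $b=|W_i^{\reachto v}|$. A vertex lying in both sets would give $S\reachto v$, which is pruned, so the two sets can be taken disjoint. The total count is then at most
\[
\sum_{a,b}\binom{k}{a}\binom{k-a}{b}\,2^{(k-a)(k-b)-(k-a-b)}.
\]
This sum is dominated by the term $a=b=0$, which contributes $2^{k^2-k}$. Each unit increase of $a$ or $b$ multiplies the binomial factor by at most $k$ but divides the power of two by roughly $2^{k-1}$. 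The whole sum is therefore $\order{2^{k^2-k}\cdot \mathrm{poly}}$, which is $\order{2^{k^2}}$.

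\textbf{Main obstacle.} The delicate point is handling the small additive terms in the exponent, namely the $+2k+1$ from the $S$ row and $v$ column. They are absorbed by using the diagonal saving of $k$ together with the fact that the row and column of $S$ and $v$ carry no free information. Once this is established, bounding the sum by a geometric-type series, or crudely via $\binom{k}{a}\le 2^k$ with a careful exponent comparison, is routine.
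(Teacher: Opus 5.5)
Your argument is essentially the paper's: fix $W_i^{S\reachto}(E^\prime)$ and $W_i^{\reachto v}(E^\prime)$, observe that the $S$-row and $v$-column are then forced, and sum $2^{(|W_i|-a)(|W_i|-b)}$ over the choices; the paper splits into $a+b=0$, $a+b=1$ and $a+b\ge 2$, and in the last case uses $(|W_i|-a)(|W_i|-b)\le(|W_i|-1)^2$ against at most $4^{|W_i|}$ pairs, so your diagonal and disjointness savings are not needed. Three slips need fixing, though none affects the bound: (i) the $S$-row is not all zeros on the surviving columns but is $1$ exactly on those in $W_i^{S\reachto}(E^\prime)$, and symmetrically for the $v$-column; (ii) disjointness of the two sets holds only when $S$ cannot reach $v$ in $(V,E^\prime)$, while the lemma counts every $E^\prime\subseteq E_{<i}$, so you should sum over all pairs; and (iii) raising $a$ by one divides your term by only $2^{|W_i|-b-1}$, not roughly $2^{|W_i|-1}$, so close the sum with the crude $a+b\ge 2$ bound above rather than the per-step heuristic.
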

\begin{proof}
  If we fix $W_i^{S\reachto}(E^\prime)$ and $W_i^{\reachto v}(E^\prime)$, some entries of $\Phi_i^v(E^\prime)$ are determined as follows: $\Phi_i^v(E^\prime)_{Sw}=1$ if $w\in W_i^{S\reachto}(E^\prime)$ or $0$ if $w\in W_i\setminus W_i^{S\reachto}(E^\prime)$, and $\Phi_i^v(E^\prime)_{uv}=1$ if $u\in W_i^{\reachto v}(E^\prime)$ or $0$ if $u\in W_i\setminus W_i^{\reachto v}(E^\prime)$.
  Therefore, the possible patterns on $\Phi_i^v(E^\prime)$ is bounded by $\order{2^{(|W_i|-|W_i^{S\reachto}(E^\prime)|)(|W_i|-|W_i^{\reachto v}(E^\prime)|)}}$ because $\Phi_i^v(E^\prime)$ is a binary matrix of size $(|W_i|-|W_i^{S\reachto}(E^\prime)|+1)\times (|W_i|-|W_i^{\reachto v}(E^\prime)|+1)$, and the entries of one row and one column are already determined.
  
  We conduct a case analysis of $k=|W_i^{S\reachto}(E^\prime)|+|W_i^{\reachto v}(E^\prime)|$ when fixing $W_i^{S\reachto}(E^\prime)$ and $W_i^{\reachto v}(E^\prime)$.
  When $k=0$, the bound is $\order{2^{|W_i|^2}}$.
  When $k=1$, the bound is $\order{2^{|W_i|^2-|W_i|}}$, and the number of patterns on $(W_i^{S\reachto}(E^\prime),W_i^{\reachto v}(E^\prime))$ is bounded by $\order{|W_i|}$.
  When $k\geq 2$, since $(|W_i|-|W_i^{S\reachto}(E^\prime)|)(|W_i|-|W_i^{\reachto v}(E^\prime)|)\leq (|W_i|-1)(|W_i|-1)=|W_i|^2-2|W_i|+1$ (supposing that $|W_i|\geq 3$), the number of possible patterns is bounded by $\order{2^{|W_i|^2-2|W_i|}}$.
  The number of possible patterns on $(W_i^{S\reachto}(E^\prime),W_i^{\reachto v}(E^\prime))$ is bounded by $2^{|W_i|}\cdot 2^{|W_i|}$ because $W_i^{S\reachto},W_i^{\reachto v}\subseteq W_i$.
  Thus, the total number of patterns on $\Phi_i^v(E^\prime)$ is bounded by $\order{2^{|W_i|^2}}+\order{|W_i|\cdot 2^{|W_i|^2-|W_i|}}+\order{2^{2|W_i|}}\cdot\order{2^{|W_i|^2-2|W_i|}}=\order{2^{|W_i|^2}}+\order{2^{|W_i|^2+\log|W_i|-|W_i|}}+\order{2^{|W_i|^2}}=\order{2^{|W_i|^2}}$.
\end{proof}

\begin{restatable}{lemma}{LemPatternPsi}
  \label{lem:pattern_psi}
  $|\mathcal{N}_i|=|\{\Psi_i(E^\prime)\mid E^\prime\subseteq E_{<i}\}|$ is bounded by $\order{2^{|W_i|^2}}$.
\end{restatable}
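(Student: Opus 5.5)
The plan is to follow the argument of Lemma~\ref{lem:pattern_phi} closely; the STC case is in fact simpler, since there is no column index $v$ and hence no set $W_i^{\reachto v}$ to track. (Note that the labels $\mathcal{M}_i$ and $\mathcal{N}_i$ in the two lemma statements appear swapped relative to their definitions. Here we bound the number of distinct STCs $\Psi_i(E^\prime)$ for $E^\prime\subseteq E_{<i}$.)

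\emph{Step 1: fix the reachable set.} We fix $R\coloneqq W_i^{S\reachto}(E^\prime)\subseteq W_i$ and count the STCs whose missing row set is exactly $R$. Then $\Psi_i(E^\prime)$ is a binary matrix indexed by $(W_i\setminus R\cup\{S\})\times W_i$. The row $S$ is completely determined by $R$: $\Psi_{Sx}=1$ if and only if $x\in R$. Moreover, for any $u\in W_i\setminus R$ and $x\in R$, reachability alone does not fix $\Psi_{ux}$, so these entries remain free. Also $\Psi_{uu}=1$ on the diagonal. The free entries are therefore those indexed by $(W_i\setminus R)\times W_i$, giving at most $2^{(|W_i|-|R|)|W_i|}$ patterns for a fixed $R$.

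\emph{Step 2: sum over $R$.} Write $w=|W_i|$ and $r=|R|$. Summing over all choices of $R$ gives
\begin{equation*}
\sum_{r=0}^{w}\binom{w}{r}2^{(w-r)w}=2^{w^2}\sum_{r=0}^{w}\binom{w}{r}2^{-rw}=2^{w^2}\left(1+2^{-w}\right)^{w}\leq 2^{w^2}\cdot e^{w2^{-w}}\leq e^{1/2}\cdot 2^{w^2},
\end{equation*}
which is $\order{2^{|W_i|^2}}$. This closed form replaces the case analysis on $k$ used in Lemma~\ref{lem:pattern_phi}. Alternatively, one can mimic that proof directly: the case $r=0$ contributes $2^{w^2}$, the case $r=1$ contributes $w\cdot 2^{w^2-w}$, and the case $r\ge2$ contributes at most $2^{w}\cdot 2^{w^2-2w}$. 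All three terms are $\order{2^{w^2}}$.

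\emph{Main obstacle.} The only delicate point is making sure the index set is counted correctly. Unlike TCs, the columns of $\Psi$ range over all of $W_i$, so the removed row set is the only source of savings, and we must confirm that this suffices. The computation in Step 2 shows that it does: the savings from $R$ exactly offset the $\binom{w}{r}$ choices of $R$, up to the constant factor $(1+2^{-w})^w\le e^{1/2}$. The small cases ($w\le 2$) are absorbed into the constant.
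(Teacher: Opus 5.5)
Your proof is correct and is essentially the paper's argument. You fix $R=W_i^{S\reachto}(E^\prime)$, observe that the row $S$ is then determined so at most $2^{|W_i|(|W_i|-|R|)}$ matrices remain, and sum over the choices of $R$; the paper bounds this sum more crudely by splitting into $|R|=0$ and $|R|\geq 1$ (with at most $2^{|W_i|}$ choices of $R$), whereas your binomial identity $2^{w^2}(1+2^{-w})^{w}\leq e^{1/2}\cdot 2^{w^2}$ gives an explicit constant, and your remark that $\mathcal{M}_i$ and $\mathcal{N}_i$ are swapped in the two lemma statements relative to their definitions is accurate.
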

\begin{proof}
  We have a similar proof to that of Lemma~\ref{lem:pattern_phi}.
  When we fix $W_i^{S\reachto}(E^\prime)$, $\Psi_i(E^\prime)_{Sw}=1$ if $w\in W_i^{S\reachto}(E^\prime)$ or $0$ if $w\in W_i\setminus W_i^{S\reachto}(E^\prime)$.
  Therefore, the number of possible patterns on $\Psi_i(E^\prime)$ is bounded by $\order{2^{|W_i|(|W_i|-|W_i^{S\reachto}(E^\prime)|}}$.

  We again conduct a case analysis of $k=|W_i^{S\reachto}(E^\prime)|$.
  When $k=0$, the bound is $\order{2^{|W_i|^2}}$.
  When $k\geq 1$, the number of patterns on $\Psi_i(E^\prime)$ is bounded by $\order{2^{|W_i|^2-|W_i|}}$ and the number of possible patterns on $W_i^{S\reachto}(E^\prime)\subseteq W_i$ is $\order{2^{|W_i|}}$.
  Thus, the total number of patterns on $\Psi_i(E^\prime)$ is bounded by $\order{2^{|W_i|^2}}+\order{2^{|W_i|}}\cdot\order{2^{|W_i|^2-|W_i|}}=\order{2^{|W_i|^2}}$.
\end{proof}

Therefore, $|\mathcal{M}_i|=|\mathcal{N}_i|=\order{2^{\omega^2}}$, and the total time complexity for diagram construction can be bounded by $\order{m\omega^2\cdot 2^{\omega^2}}$.

Next, we consider the DP.
Each value can be computed in constant time.
The number of values to compute for $\Psi_i$ and $\Phi_i^v$ is bounded by $\order{|\mathcal{M}_i|+|W_i||\mathcal{M}_i|+|\mathcal{N}_i|}=\order{\omega\cdot 2^{\omega^2}+2^{\omega^2}}=\order{\omega\cdot 2^{\omega^2}}$.
Thus, the total complexity is $\order{m\omega\cdot 2^{\omega^2}}$.

Finally, computing each $\prob{S\reachto v}$ using (\ref{eq:expansion4}) requires $\order{|\mathcal{M}_i|}=\order{2^{\omega^2}}$ time.
Thus, it needs $\order{n\cdot 2^{\omega^2}}$ time in total for all $\prob{S\reachto v}$ values.
\begin{proposition}
  \label{prop:complexity1}
  Given directed graph $G$, seed set $S$, each edge's probability $p_e$ of presence, and the edge ordering $e_1,\ldots,e_m$, the probability $\prob{S\reachto v}$ for every vertex $v$ can be computed in $\order{(m+n)\cdot \omega^2\cdot 2^{\omega^2}}$ time in total.
\end{proposition}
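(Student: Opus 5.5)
Proposition~\ref{prop:complexity1} follows by combining Lemma~\ref{lem:correctness} (correctness of Algorithm~\ref{alg:proposed}) with a phase-by-phase accounting of its running time, plus the degree-$1$ post-processing. First I fix the data layout. Each $\mathcal{M}_i^\prime$ and $\mathcal{N}_i^\prime$ is a hash map keyed by the matrix, so insertion and lookup cost $\order{|W_{i+1}|^2}=\order{\omega^2}$ per key. For each $\Psi$, the SCCs of $G_i(\Psi)$ are stored explicitly with a map from vertex to SCC label. Also, $|\mathcal{M}_i^\prime|\le|\mathcal{M}_i|$ and $|\mathcal{N}_i^\prime|\le|\mathcal{N}_i|$, since the primed sets are subsets after pruning; for $\mathcal{N}_i^\prime$ this holds because every element is a TC $\Phi_i^t$ for the single imaginary vertex $t$. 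Hence Lemmas~\ref{lem:pattern_phi} and~\ref{lem:pattern_psi} bound both by $\order{2^{\omega^2}}$.

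\textbf{Construction phase (lines 1--15).} For each level $i$ and each $\Psi\in\mathcal{M}_i^\prime$, Lemma~\ref{lem:transition} computes $\bddlo(\Psi)$ and $\bddhi(\Psi)$, together with the SCCs and the successive-SCC correspondence of Definition~\ref{def:successive}, in $\order{\omega^2}$ time.

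The inner loop over SCCs $C$ is the delicate step, and I expect it to be the main obstacle. There are up to $|W_i|$ SCCs, and for each $C$ with no successor the algorithm forms $\trans^u(\Psi)$, applies $\bddf^u$, and applies $\modifymap^t$, each naively $\order{\omega^2}$. Charged naively, this gives $\order{\omega^3}$ per $\Psi$, which exceeds the claimed bound. To avoid this, I would argue that a successor fails to exist only when every vertex of $C$ leaves the frontier at step $i$. The set $W_i\setminus W_{i+1}$ consists of at most the two endpoints of $e_i$, so at most two SCCs $C$ take the else-branch. Each of those costs $\order{\omega^2}$ via Lemma~\ref{lem:transition}. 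The remaining SCCs cost $\order{1}$ each, since they only require a lookup through the precomputed correspondence. The per-$\Psi$ cost therefore stays $\order{\omega^2}$.

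Each $\Phi\in\mathcal{N}_i^\prime$ is likewise handled in $\order{\omega^2}$ time by Lemma~\ref{lem:transition}. Summing over all levels, the construction phase costs $\order{\sum_i(|\mathcal{M}_i|+|\mathcal{N}_i|)\omega^2}=\order{m\omega^2 2^{\omega^2}}$.

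\textbf{DP phase (lines 16--26).} Each $\Pmap$ and $\Qmap$ entry is computed in $\order{1}$ time from stored pointers. Each $\Rmap[\Psi,C]$ entry is also $\order{1}$, given the SCC labels stored as pointers. There are $\order{(1+\omega)2^{\omega^2}}$ entries per level, so this phase costs $\order{m\omega 2^{\omega^2}}$.

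\textbf{Output phase (lines 27--31).} I first precompute, in $\order{m+n}$ time from the edge ordering, one index $i^\star$ with $v\in W_{i^\star}$ for every vertex $v$ of degree at least $2$. For example, I can take $i^\star$ to be one plus the first index of an edge incident to $v$. Then $\Resmap[v]$ costs $\order{|\mathcal{M}_{i^\star}|}$ lookups, each $\order{1}$, for a total of $\order{n2^{\omega^2}}$. Degree-$1$ vertices are then handled in $\order{1}$ each, as described in Section~\ref{ssec:procedures}.

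Adding the three phases gives $\order{(m+n)\omega^2 2^{\omega^2}}$, as claimed.
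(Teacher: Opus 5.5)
Your proposal is correct and follows the paper's own phase-by-phase accounting: diagram construction is bounded via Lemma~\ref{lem:transition} and the $\order{2^{\omega^2}}$ pattern bounds, the DP costs $\order{1}$ per entry with $\order{\omega\cdot 2^{\omega^2}}$ entries per level, and the final summation costs $\order{2^{\omega^2}}$ per vertex. Your added observation that at most two SCCs of $G_i(\Psi)$ can lack a successive SCC (because $W_i\setminus W_{i+1}$ contains only endpoints of $e_i$) accounts for the else-branch of the SCC loop, which the paper's proof leaves implicit and which, charged naively, would add a factor of $\omega$.
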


Since we can generate an edge ordering with $\omega\leq \omega_p$ from a path decomposition of width $\omega_p$~\cite{inoue2016pathwidth} in linear time, we achieve Theorem~\ref{thm:main2}.
Moreover, with the linear-time path-decomposition algorithm for a bounded-pathwidth graph~\cite{bodlaender96tw}, we have the following.
\begin{theorem}
  \label{thm:main}
  Given directed graph $G$, seed set $S$, and each edge's probability $p_e$ of presence, suppose that $G$'s pathwidth is bounded by a constant.
  Then, $\prob{S\reachto v}$ for every vertex $v$ can be computed in $\order{m+n}$ time in total.
\end{theorem}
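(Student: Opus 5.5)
The proof is a chain of three reductions. First we compute a path decomposition, then turn it into an edge ordering, and finally invoke Proposition~\ref{prop:complexity1}.

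\textbf{Step 1: path decomposition.} Because the pathwidth of $G$ (that is, of its underlying undirected graph) is bounded by a constant $k$, we run Bodlaender's algorithm~\cite{bodlaender96tw}. For fixed $k$ it runs in time $\order{n+m}$, more precisely $f(k)\cdot n$. It either returns a path decomposition of width at most $k$ or decides that none exists; the latter cannot happen by assumption. Two details need care here.
\begin{itemize}
\item The algorithm is stated for treewidth. For pathwidth it yields a path decomposition in linear time, since bounded pathwidth implies bounded treewidth and Bodlaender's framework also covers pathwidth.
\item The decomposition has $\order{n}$ bags, each of constant size. So it can be written down and processed in linear time.
\end{itemize}
Before all this we preprocess the graph in linear time: remove self-loops and split $G$ into connected components, as assumed in Section~\ref{sec:preliminaries}.

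\textbf{Step 2: edge ordering.} Given the decomposition of width $\omega_p\le k$, we apply the conversion of~\cite{inoue2016pathwidth}. In linear time it produces an edge ordering $e_1,\ldots,e_m$ with $\omega=\max_i|W_i|\le\omega_p$. The frontier $W_i$ is defined via the underlying undirected incidences, so edge directions play no role in this step.

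\textbf{Step 3: the main computation.} We invoke Proposition~\ref{prop:complexity1} (equivalently Theorem~\ref{thm:main2}), which takes $\order{(m+n)\omega^2 2^{\omega^2}}$ time. Since $\omega\le k$ is a constant, the factor $\omega^2 2^{\omega^2}$ is a constant and the bound becomes $\order{m+n}$. Degree-$1$ vertices are handled in $\order{1}$ each, as described in Section~\ref{ssec:procedures}. Each connected component is processed separately and the total cost is additive over components, so the overall running time is $\order{m+n}$.

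The main obstacle is Step~1. We must make sure that the cited linear-time result really delivers a \emph{path} decomposition, with a linear number of bags, for the underlying undirected graph, even when that graph has parallel or antiparallel edges. The fix is to simplify the underlying multigraph first: this changes neither the pathwidth nor the frontier sets. The rest of the argument is bookkeeping on constants.
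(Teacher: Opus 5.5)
Your proposal is correct and follows the paper's own argument: compute a path decomposition in linear time via \cite{bodlaender96tw}, convert it into an edge ordering with $\omega\le\omega_p$ via \cite{inoue2016pathwidth}, and apply Proposition~\ref{prop:complexity1}, whose factor $\omega^2 2^{\omega^2}$ is then a constant. The extra bookkeeping you add (self-loops, components, degree-$1$ vertices, multi-edges) is harmless, with one small correction: reinserting parallel or antiparallel edges into an ordering built for the simplified graph can enlarge a frontier by up to two vertices rather than leaving the frontier sets unchanged, but $\omega$ stays bounded by a constant, so the conclusion is unaffected.
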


The previous algorithm~\cite{maehara17} requires $\order{m\omega^2\cdot 2^{\omega^2}}$ time for computing $\prob{S\reachto v}$, as discussed in Section~\ref{sec:past}.
Thus, it takes $\order{nm\omega^2\cdot 2^{\omega^2}}$ time for computing $\sigma(S)$.
In contrast, the proposed algorithm takes $\order{(n+m)\omega^2\cdot 2^{\omega^2}}$ time.
Therefore, in terms of time complexity, our algorithm reduces the dependence on the graph size from $\order{mn}$ to $\order{m+n}$ while the dependence on $\omega=\max_i|W_i|$ is kept the same.

\section{Discussion}
\label{sec:discussion}
Here, we point out the differences between the proposed algorithm and the existing algorithm that computes the probability of connection for undirected graphs~\cite{nakamura21}.
These algorithms basically share the same idea: In computing the probability of reachability (or connectivity in the undirected graphs) of every vertex, the previous top-down style algorithms, like that in Section~\ref{sec:past}, generate similar diagrams for different vertices $v$, so they attempt to share equivalent structures.
The algorithm for undirected graphs uses connected components (CCs) among frontier vertices to efficiently compute the conditional probabilities, similar to $\prob{S\reachto C\mid\Psi}$ in our algorithm.
The difference comes from the computation of conditional probability when a successive SCC or CC does not exist.
In undirected graphs, if a successive CC does not exist, the current CC is disconnected from any further vertex, so we immediately have the conditional probability $0$ (or $1$ in an exceptional case).
In contrast, in directed graphs, even if a successive SCC does not exist, the current SCC may still reach or be reached from further vertices because it can reach or be reached from the frontier vertices.
This makes the computation of the conditional probabilities difficult because we cannot immediately determine them when no successive SCC exists.
We resolve this issue by focusing on the equivalence of lower levels of diagrams (Lemma~\ref{lem:equiv1}).
This enables us to compute the conditional probabilities using a constant number of diagrams (STCs $\Psi$ and TCs $\Phi^t$ of imaginary vertex $t$) without breaking the linear complexity.

\section{Conclusion}
\label{sec:conclusion}
We proposed an algorithm for computing the influence spread under IC model exactly in $\order{(m+n)\omega_p^2\cdot 2^{\omega_p^2}}$ time.
For graphs with bounded pathwidth $\omega_p$, we improved the complexity from $\order{mn}$ to $\order{m+n}$, linear in the graph size, while the dependence on the pathwidth $\omega_p$ is kept the same.

Finally, we mention some future directions.
First, a more efficient exact evaluation of influence spread, suitable for the greedy algorithm~\cite{kempe03im}, should be investigated.
Chen et al.~\cite{chen09efficient} developed an algorithm that simultaneously approximates $\sigma(S\cup\{v\})-\sigma(S)$ for all $v\in V\setminus S$, which corresponds to one iteration of the greedy algorithm.
Nakamura et al.~\cite{nakamura23a} developed an algorithm for undirected graphs that exactly and simultaneously computes the expected number of connected vertices for all vertices.
This algorithm is promising for adoption with directed graphs as a way to simultaneously compute $\sigma(S\cup\{v\})$ for all $v\in V\setminus S$ exactly.
Second, we should investigate whether a linear time algorithm for computing $\sigma(S)$ is possible for graphs with bounded \emph{treewidth}.
A roadmap for achieving this may be to fit the diagram construction method based on tree decomposition~\cite{amarilli17circuit} to the computation of influence spread and then observe the equivalences.
However, such an improvement may not be straightforward due to the complicated nature of these algorithms.



\bibliography{mybib}

\end{document}